

\documentclass[12pt,draftclsnofoot,onecolumn]{IEEEtran}
\usepackage[colorlinks,bookmarksopen,bookmarksnumbered,citecolor=red,urlcolor=red,]{hyperref}
\usepackage{CJK}
\usepackage{indentfirst}
\usepackage{algorithm,algorithmic,amsbsy,amsmath,amssymb,epsfig,bbm,mathrsfs, bbm} 
\usepackage{multirow}
\usepackage{mathrsfs}
\usepackage{epstopdf}
\usepackage[noadjust]{cite}

\usepackage{verbatim}
\usepackage{amsfonts}
\usepackage{amsthm}
\hyphenation{op-tical net-works semi-conduc-tor}
\usepackage[subfigure]{graphfig}

\newtheorem{theorem}{Theorem}
\newtheorem{prop}{Proposition}
\newtheorem{corollary}{Corollary}

\begin{document}
 
\title{Performance Analysis of Ambient RF Energy Harvesting with Repulsive Point Process Modeling} 

\author{Ian Flint, Xiao Lu, Nicolas Privault, Dusit Niyato, and Ping Wang	\\
\thanks{Ian Flint and Nicolas Privault are with School of Physical $\&$ Mathematical Sciences, Nanyang Technological University, Singapore. (e-mail:iflint@ntu.edu.sg, nprivault@ntu.edu.sg) 
    
Xiao Lu, Dusit Niyato, and Ping Wang are with School of Computer Engineering, Nanyang Technological University, Singapore. (e-mail:luxiao@ntu.edu.sg, dniyato@ntu.edu.sg, wangping@ntu.edu.sg)}
}

\markboth{}{Shell \MakeLowercase{\textit{et al.}}: Bare Demo of
IEEEtran.cls for Journals}

\maketitle

\vspace{-20mm}

\begin{abstract}
Ambient RF (Radio Frequency) energy harvesting technique has recently been proposed as a potential solution to provide proactive energy replenishment for wireless devices. This paper aims to analyze the performance of a battery-free wireless sensor powered by ambient RF energy harvesting using a stochastic geometry approach. Specifically, we consider the point-to-point uplink transmission of a wireless sensor in a stochastic geometry network, where ambient RF sources, such as mobile transmit devices, access points and base stations, are distributed as a Ginibre $\alpha$-determinantal point process (DPP). The DPP is able to capture repulsion among points, and hence, it is more general than the Poisson point process (PPP). We analyze two common receiver architectures: separated receiver and time-switching architectures. For each architecture, we consider the scenarios with and without co-channel interference for information transmission. We derive the expectation of the RF energy harvesting rate in closed form and also compute its variance. Moreover, we perform a worst-case study which derives the upper bound of both power and transmission outage probabilities. Additionally, we provide guidelines on the setting of optimal time-switching coefficient in the case of the time-switching architecture. Numerical results verify the correctness of the analysis and show  various tradeoffs between parameter setting. Lastly, we prove that the sensor is more efficient when the distribution of the ambient sources exhibits stronger repulsion. 

\end{abstract}

\emph{Index terms- Ambient RF energy harvesting, sensor networks, determinantal point process, Poisson point process, Ginibre model}.

\section{Introduction}

Ambient RF energy harvesting techniques offer the capability of converting the received RF signals from environment into electricity \cite{X.2014Lu,XLuSurvey}.  Therefore, it has recently emerged as an alternative method to operate low-power devices \cite{Popovic2013,X.Lu2014}, such as wireless sensors \cite{NParks}. Ambient RF energy harvesting aims to capture and recycle the environmental energy such as broadcast TV, radio and cellular signals~\cite{X.Lu2015}, which are essentially free and universally present, making this technique even more appealing. An experiment with ambient RF energy harvesting in~\cite{A2009Sample} shows that 60$\mu$W is harvested from TV towers that are 4.1km away. It is also reported in~\cite{M2008Tentzeris} that 109$\mu$W RF power can be harvested from daily routine in Tokyo. In \cite{D2010Bouchouicha}, the authors measure the ambient RF power density from 680MHz to 3.5GHz and show that the average power density from 1GHz to 3.5GHz is of the order of 63$\mu$ W/m$^{2}$. Detected 6.3km away from Tokyo Tower, the RF-to-DC conversion efficiency is demonstrated to be about 16$\%$, 30$\%$ and 41$\%$ when the input power is -$15dBm$, -10$dBm$ and -5$dBm$, respectively~\cite{R2003Shigeta}. 

In this context, wireless devices powered by ambient RF energy are enabled for battery-free implementation, and a perpetual lifetime.
For example, reference~\cite{V2013Liu} demonstrates that an information rate of 1kbps can be achieved between two prototype devices powered by ambient RF signals, at distance of up to 2.5 feet and 1.5 feet for outdoors and indoors, respectively. Existing literature has also presented many implementations of battery-free devices powered by ambient energy from WiFi \cite{U2012Olgun}, GSM \cite{M2013Pinuela} and DTV bands  \cite{P2013Nintanavongsa} as well as ambient mobile electronic devices \cite{G2011Karthik}.

\subsection{Related Work}
\label{sec:related}

Geometry approaches have been applied to analyze RF energy harvesting performance in cellular network~\cite{K2014Huang}, cognitive radio network~\cite{S2013Lee}, and relay network~\cite{I2014Krikidis,Z.2014Ding,V.2014Mekikis}. The authors in \cite{K2014Huang} investigate tradeoffs among transmit power and density of mobiles and wireless charging stations which are both distributed as a homogeneous Poisson Point Process (PPP). Energy harvesting relay network has been mostly analyzed. In~\cite{S2013Lee}, the authors study a cognitive radio network where primary and secondary networks are distributed as independent homogeneous PPPs. The secondary network is powered by the energy opportunistically harvested from nearby transmitters in the primary network. Under the outage probability requirements for both coexisting networks, the maximum throughput of the secondary network is analyzed. The study in \cite{I2014Krikidis} analyzes the impact of cooperative density and relay selection in a large-scale network with transmitter-receiver pairs distributed as a PPP. 
Reference~\cite{Z.2014Ding} investigates a decode-and-forward relay network with multiple source-destination pairs. Under the assumption that the relay nodes are distributed as a PPP, the network outage probability has been characterized. 
The studies in \cite{V.2014Mekikis} investigates network performance of a two-way network-coded cooperative network, where the source, destination and RF-powered relay nodes are modeled as three independent PPPs.


Other than RF energy harvesting, stochastic geometry approaches have also been applied to address other types of energy harvesting systems. 
Reference~\cite{K.2014Huang} investigates the network coverage of a hexagonal cellular network, where the base stations are powered by renewable energy, and the mobiles are distributed as a PPP. The authors in~\cite{Y.2014Song} explore the network coverage in a relay-assisted cellular network modeled as a PPP. Each relay node adopt an energy harvesting module, the energy arrival process of which is assumed to be an independent and identical poisson process.                       
In~\cite{S.2014Dhillon}, the authors provides a fundamental characterization of the regimes under which a multiple-tier heterogeneous network with genetic energy harvesting modules  fundamentally achieves the same performance as the ones with reliable energy sources.  Different from above studies, our previous in~\cite{X.LuWCNC} adopts  a determinantal point process model to analyze the downlink transmission performance from an access point to a sensor powered by ambient RF energy.





\subsection{Motivations and Contributions}

As discussed in Section~\ref{sec:related}, the prior literature mainly focuses on the performance analysis on RF-powered wireless devices using PPPs. We generalize this approach by considering a larger class of point processes, specifically the Ginibre $\alpha$-determinantal point process. 
Table~\ref{tab:comparisonpp} compares the PPP and the Ginibre $\alpha$-DPP with regards to a few key points summarized in the table. Therein, the column ``simulation'' refers to the ease (in terms of time, computational complexity and implementation difficulty) of the simulation of the point process. By mathematical tractability, we mean the possibility of obtaining closed mathematical formulas for the moments of the point process. By modulability, we mean the choice of available parameters. 
Lastly, by data fitting we refer to the range of phenomena modeled by the point process. 

\begin{table}
\centering
\caption{\footnotesize Comparison of common stochastic models.} \label{tab:comparisonpp}
\begin{tabular}{|l|l|l|l|l|} 
\hline
Model & Simulation & Mathematical tractability & Modulability & Data fitting\\ \hline
\hline
PPP & Very easy & Closed forms & Intensity & Not very fitting \\
\hline                                   
Ginibre $\alpha$-DPP & Easy & Closed forms & Intensity, $\alpha$ & Different degrees of repulsion\\
\hline                   
\end{tabular}
\end{table}

The Ginibre $\alpha$-DPP offers many advantages in terms of modeling capability and ease of simulation \cite{DecreusefondFlintVergne} (here, $-1\le\alpha<0$ is a parameter, and the PPP is a special case obtained in the limit $\alpha \to 0$). One advantage of the Ginibre $\alpha$-DPP over the PPP is that the Ginibre $\alpha$-DPP can be used to model random phenomena where repulsion is observed. Mobile systems may exhibit some clustering and repulsion behaviors, such as in mobile sensor networks \cite{A.A2007Abbasi}, mobile cellular networks \cite{S2013Cho} and mobile social networks \cite{N2014Vastardis}. Therefore, the Ginibre $\alpha$-DPP is a suitable tool for the analysis of the impact of distribution patterns on network performance. Previous works in \cite{N.2012Miyoshi} and \cite{N.Deng2014} have adopted Ginibre point processes to model the locations of base stations in wireless networks.

Since the direct simulation of DPP models is computationally slow, in this paper we focus on a worst-case scenario which is simpler to analyze. 
Namely, we perform a worst-case analysis of the point-to-point uplink transmission between an RF-powered sensor node and a data sink. The sensor node needs to harvest RF energy from ambient RF sources (e.g., cellular mobiles and access points), which are modeled using a Ginibre $\alpha$-DPP which we have briefly described previously. The sensor, assumed to be battery-free, transmits to the data sink using instantaneously harvested RF energy. A power outage happens if the instantaneously harvested energy fails to meet the circuit power consumption of the sensor. Moreover, if the minimum transmission rate requirement cannot be fulfilled, a transmission outage occurs. For the study of the transmission outage probability, we consider the scenarios of out-of-band transmission and in-band transmission.
In the former scenario, the sensor node transmits data on a  frequency band different from that for RF energy harvesting (without co-channel interference). In the latter scenario, the sensor node transmits on the same frequency band of ambient RF energy sources (with co-channel interference). 
We focus on analyzing the performance of two different receiver architectures: a separated architecture \cite{X.2014Lu} and a co-located receiver architecture, called time-switching \cite{ZhangRuiMIMO}, whose details are in Section~\ref{sec:network_model}. By modeling the RF sources with a Ginibre $\alpha$-DPP, we analyze the impact of the distribution of ambient RF sources on the performance of the RF-powered sensor. Our main contributions are summarized below.

\begin{itemize}

\item First, we derive the expectation of aggregated energy harvesting rate by the sensor. We also obtain the expression of the variance of the energy harvesting rate. 
The closed-form expressions are verified by the numerical results. Since obtaining numerical results is very time-consuming, our closed forms are very useful in practice.


\item Next, we investigate the power outage probability, i.e. the probability that the sensor node becomes inactive due to lack of sufficient energy supply. We give an upper-bound of the power outage probability in closed form, and we interpret this upper-bound as a worst-case scenario. We confirm that the theoretical power outage probability and its estimation by simulation are consistent. We also compare the performance of separated and time-switching receiver architectures by simulation.

\item We further study the transmission outage probability, i.e. the probability that the sensor fails to fulfill its transmission rate requirement, because of insufficient transmit power. The scenarios of out-of-band transmission and in-band transmission are both considered. We derive an upper-bound of the transmission outage probability and again interpret is as a worst-case scenario. 
Inspired by the observations in the numerical results, we derive the optimal value of the time-switching coefficient $\tau$. The expression for the optimal choice of $\tau$ is given and verified by numerical computation. Lastly,
we derive a lower bound of the transmission rate.  

\end{itemize}

Note that this paper is an extension of \cite{ConfVersion}, wherein we present partial of the results with the separated architecture.



The remainder of this paper is organized as follows. Section~\ref{sec:systemmodel} introduces the system model, the DPP geometry model of ambient RF sources and performance metrics. Section~\ref{sec:Analysis} estimates the performance metrics of the sensor for both Ginibre $\alpha$-DPP and PPP modeling of ambient RF sources. 
Lastly, our conclusion can be found in Section~\ref{sec:conclusion}.

{\em Notations}: Throughout the paper, we use $\mathbb{E}[X]$ to denote the probabilistic expectation of a random variable $X$, and $\mathbb{P}(A)$ to denote the probability of an event $A$.

\section{System Model}
\label{sec:systemmodel}



\subsection{Network Model}
\label{sec:network_model}

\begin{figure}
\centering
\includegraphics[width=0.4\textwidth]{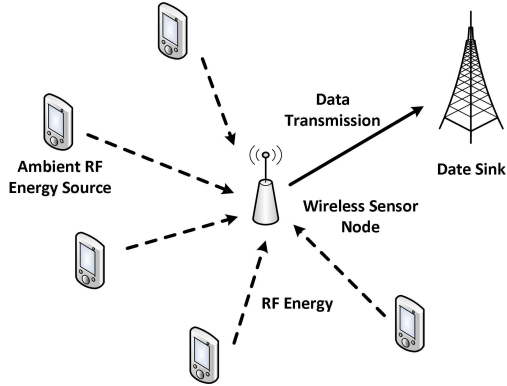}
\caption{A network model of ambient RF energy harvesting.} \label{fig:systemmodel}
\end{figure}


We consider a network comprising a number of ambient RF energy sources, i.e., wireless information transmitters, in which a sensor node is powered solely by the energy harvested from these energy sources. Figure \ref{fig:systemmodel} shows the considered network model, where the sensor node harvests RF energy emitted from the ambient sources and utilizes the harvested energy to perform  uplink data transmission to the data sink.
We model the distribution of ambient RF energy sources as a Ginibre $\alpha$-DPP, which will be specified in detail in Section~\ref{sec:geometricmodeling}. The transmit power of the ambient RF sources are assumed to be identical. Without any loss of generality, the sensor is considered to lie at the origin. Furthermore, we assume that the sensor node is battery-free. In particular, the sensor utilizes the instantaneously harvested RF energy to supply its operations. 
We study two different receiver architectures: separated receiver architecture, and time-switching, which either enables the sensor to perform data transmission and RF energy harvesting simultaneously or separately.

\begin{itemize}

\item{Separated receiver architecture}: As shown in Fig. \ref{SRA}, this architecture equips the energy harvester and the information transmitter with separated antennas so that they can function independently and concurrently. The instantaneously harvested energy is first used to operate the sensor circuit and then the surplus energy is provided for information transmission. This architecture can maximize the utilization of energy harvesting devices, but it is generally larger in size compared to the time-switching architecture.

\item{Time-Switching Architecture}: As shown in Fig. \ref{TSA}, the time-switching architecture,  is equipped with a single antenna. By adopting a switcher, this architecture allows either the energy harvester or the information transmitter attached to the antenna at a time. The time-switching architecture works on a time-slot basis. In each time slot, the energy harvester first uses $\tau$ ($0\leq\tau\leq 1$) portion of a time slot to harvest RF energy. The capacitor reserves the surplus of the harvested energy after being used to power the sensor circuit.  Next, during the rest of $1-\tau$ time, the information transmitter utilizes the surplus energy from the capacitor to transmit information.


\end{itemize}



\begin{figure} 
\centering
\subfigure [Separated Receiver Architecture] {
 \label{SRA}
 \centering
 \includegraphics[width=0.35 \textwidth]{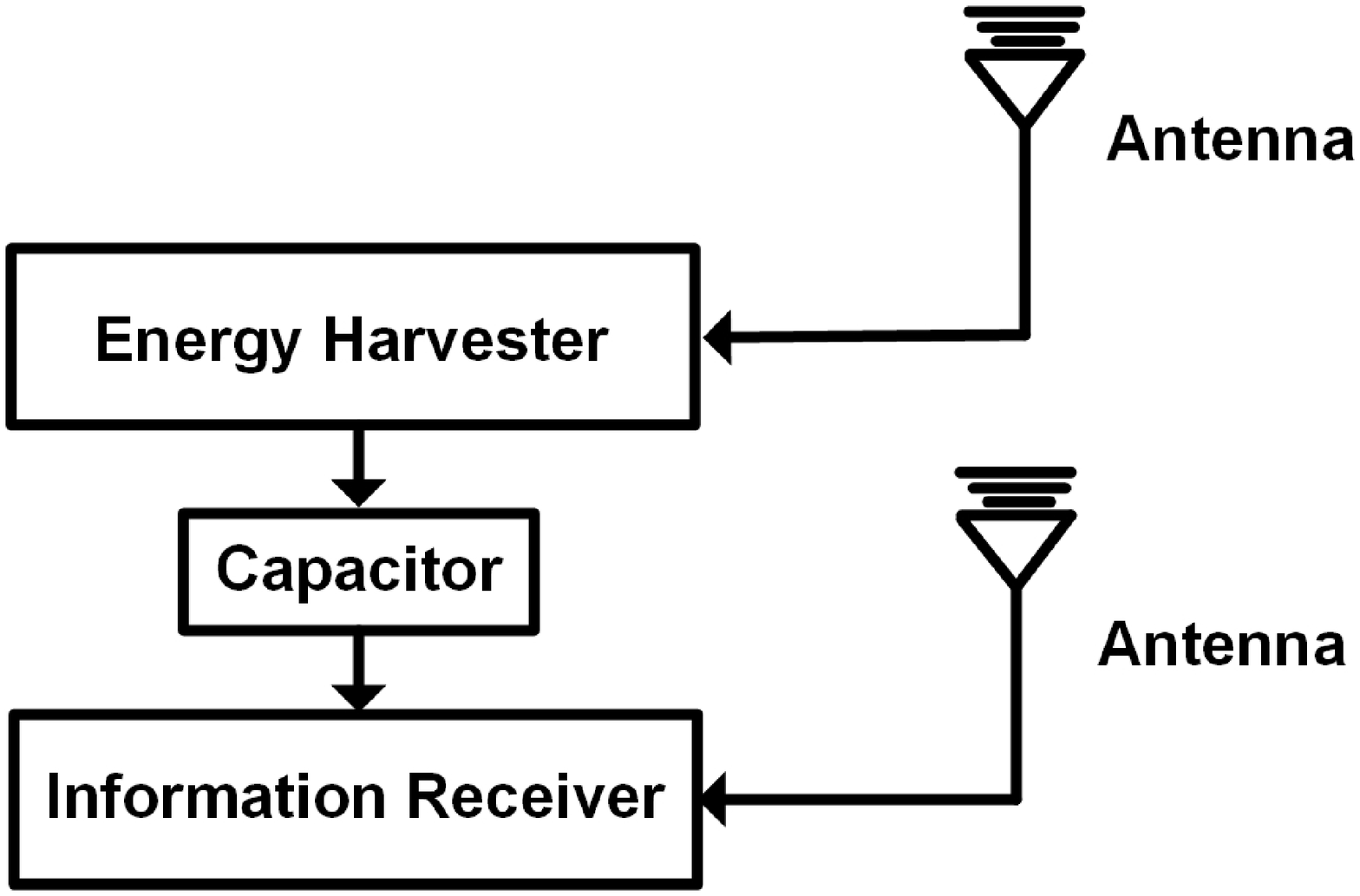}}  
 \centering
 \subfigure [Time Switching Achitecture] {
  \label{TSA}
  \centering
  \includegraphics[width=0.48  \textwidth]{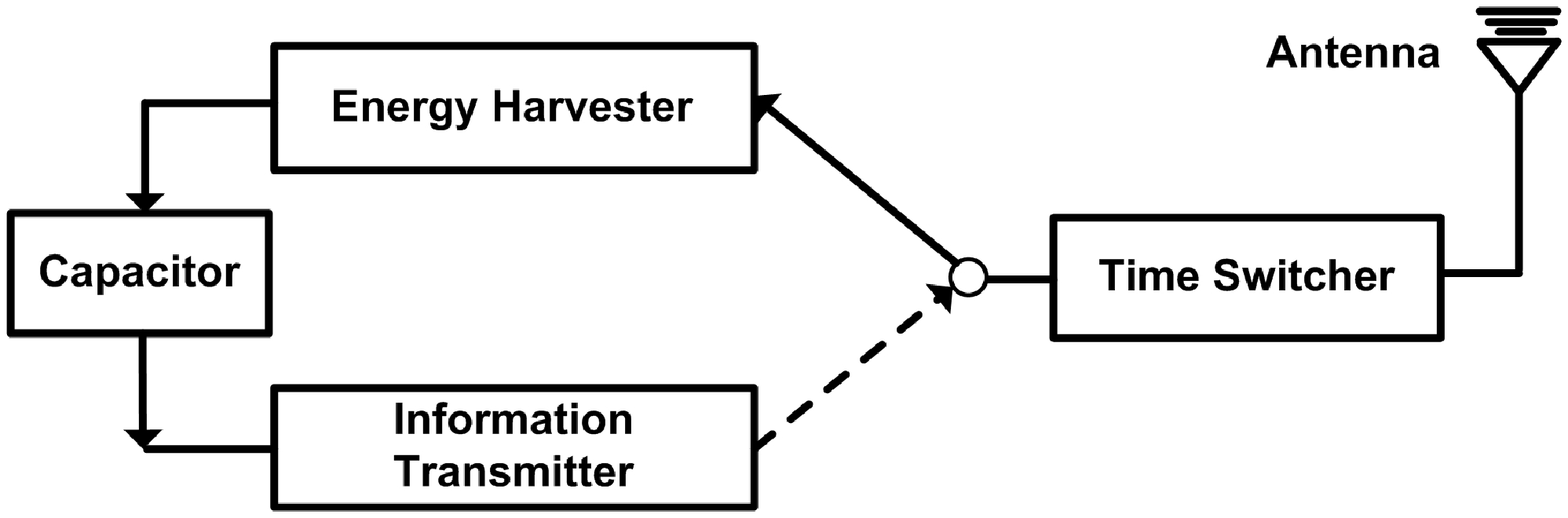}}   
  \centering
\caption{Separated receiver and time switching architectures.} 
\label{receiver_designs}
\end{figure}


The RF energy harvesting rate of the sensor node from the RF energy source $k$ in a free-space channel $P^{k}_{\mathrm{H}}$ can be obtained based on the Friis equation~\cite{Visser2013} \footnote{Other RF signal propagation models can also be used without loss of generality in the analysis of this paper.} as follows: 
\begin{equation}
\label{eq:harvestedRFpower}
	P^{k}_{\mathrm{H}}	= \varrho \beta P_{\mathrm{S}} \frac{G_{\mathrm{S}} G_{\mathrm{H}} \lambda^{2}}{(4\pi d_{k})^{2}},
\end{equation}
where $\beta$ is the RF-to-DC power conversion efficiency of the sensor node, and $\varrho$ is an efficiency factor which depends on the specific architecture.
For a given RF energy source, $P_{\mathrm{S}}$ is its transmit power, $G_{\mathrm{S}}$ is its transmit antenna gain, $\lambda$ is the wavelength at which it emits. As the focus of this paper is to analyze the impact of the locations of ambient RF sources to the performance of the sensor node, we intentionally make other parameters, e.g., $P_S$, $G_S$, and $\lambda$ to be constants for ease of presentation and analysis. Nevertheless, the proposed analytical framework can also be extended to the case when these parameters vary. $d_{k}$ is the distance of an RF energy source $k$ to the receiver antenna of the sensor node. $G_{\mathrm{H}}$ is the receive antenna gain of  the sensor node. Let ${\mathbf{x}}_k \in {\mathbb{R}}^2$ be the coordinates of the RF energy source $k$ (recall that the sensor node lies at the origin). The distance is modeled as $d_{k} = \epsilon + \lVert{\mathbf{x}}_k	\rVert$, where $\epsilon$ is a fixed (small) parameter which ensures that the associated harvested RF power is finite in expectation. Physically, $\epsilon$ is the closest distance that the RF energy sources can be to the sensor node.

The aggregated RF energy harvesting rate by the sensor node from the ambient RF sources can be computed as
\begin{equation}
\label{eq:totalamountofpower}
	P_{\mathrm{H}} = \sum_{ k \in {\mathcal{K}} } P^{k}_{\mathrm{H}}	=	\sum_{ k \in {\mathcal{K}} }	\varrho \beta P_{\mathrm{S}} \frac{G_{\mathrm{S}} G_{\mathrm{H}} \lambda^{2}}{(4\pi 	(\epsilon + \lVert {\mathbf{x}}_k \rVert)	)^{2}} ,
\end{equation}
where ${\mathcal{K}}$ is a random set consisting of all RF energy sources. We assume that $\mathcal{K}$ is a point process~\cite{Kallenberg}.


The sensor consumes a base circuit power, denoted by $P_{\mathrm{C}}$. Note that this circuit power consumption also accounts for the energy loss due to various factors such as capacity leakage. Following practical models~\cite{G2009Miao}, the circuit power consumption of the sensor is assumed to be fixed. We assume that, other than circuit power consumption, there is no power loss during transfer from energy harvester to information transmitter for both architectures.
For the separated receiver architecture, the transmit power is given by $P_{\mathrm{T}} = \left[	P_{\mathrm{H}} - P_{\mathrm{C}} \right]^+$, where $[x]^+ = \max(0, x)$ and $P_{\mathrm{C}}$ is a constant. For the time-switching architecture, all the harvested energy can be used in the data transmission phase to maximize transmission rate. Thus the transmit power is dependent on the transmission time, and is given by $P_{\mathrm{T}} = \left[	P_{\mathrm{H}} - P_{\mathrm{C}} \right]^+/(1-\tau)$.
Then, the general form of maximum transmission rate of the sensor node is given as follows\footnote{Note that state-of-the-art wireless information receivers are not yet able to achieve this rate upper bound due to additional processing noise such as the RF band to baseband conversion noise.}:
\begin{equation}
\label{eq:maxtransmission}
	C	=	\eta\cdot W \cdot \log_2	\left(	1 + h_0	\frac {	\left[	P_{\mathrm{H}} - P_{\mathrm{C}} \right]^+	}	{\eta(\xi\sum_{k\in\mathcal K}P_{\mathrm H}^k+\sigma^2)} 	\right),
\end{equation}
where $W$ is the transmission bandwidth, and $0\le\eta\le 1$ is an efficiency factor depending on the specific architecture. $\sigma^2$ is a nonnegative constant which represents the power of additive white Gaussian noise (AWGN). The term $\xi\sum_{k\in\mathcal K}P_{\mathrm H}^k$ corresponds to the interference, and the specific value of $\xi\in\{0,1\}$ depends on whether we consider an out-of-band or in-band transmission scenario. $h_0$ denotes the channel gain between the transmit antenna of the sensor node and the receive antenna of data sink. 
The separated receiver architecture corresponds to $\varrho=1$ and $\eta=1$.
The time-switching architecture corresponds to the $\varrho=\tau$ and $\eta=1-\tau$, where $\tau$ is the time-switching parameter.
In both of these cases, $\xi=1$ corresponds to an in-band transmission scenario, while $\xi=0$ corresponds to an out-of-band transmission.

\subsection{Geometric DPP Modeling of Ambient RF Energy Sources}
\label{sec:geometricmodeling}

As an extension of the Poisson setting, we model the locations of RF energy sources using a point process $\mathcal{K}$ on an observation window $O\subset\mathbb R^2$ such that $0<|O|<+\infty$ (here $|O|$ denotes the Lebesgue measure of $O$). In other terms, $\mathcal{K}$ is an almost surely finite random collection of points inside $O$. We refer to~\cite{Kallenberg} and~\cite{DaleyVereJones} for the general theory of point processes. The correlation functions $\rho^{(n)}$ of $\mathcal K$ (if they exist), {\em w.r.t.} the Lebesgue measure on $\mathbb R^2$,  verify
\begin{equation}
\label{eq:defcorrelation}
\mathbb E\left[\prod_{i=1}^{n}\mathcal K(B_i)\right]=\int_{B_1\times\cdots\times
B_n}\rho^{(n)}(x_1,\ldots,x_n)\,\mathrm{d}x_1 \cdots	\mathrm{d}x_n,
\end{equation}
for any family of mutually disjoint bounded subsets $B_1,\ldots,B_n$ of $E$, $n\geq 1$. 
Heuristically, $\rho^{(1)}$ is the particle density, and 
$\rho^{(n)}(x_1,\ldots,x_n)\,\mathrm dx_1\ldots\mathrm dx_n$
is the probability of finding a particle in the vicinity of each $x_i$, $i=1,\dots,n$. The correlation functions are thus a generalization of the concept of probability density function to the framework of point processes. The correlation functions play an important role in the definition and interpretation of a general $\alpha$-DPP.

\subsubsection{General $\alpha$-determinantal point process}

We let $\alpha=-1/j$ for an integer $j\in\mathbb{N}^*$, and we define a general $\alpha$-DPP in the following. Let us introduce a map $K:\mathrm L^2(\mathbb R^2)\mapsto\mathrm L^2(\mathbb R^2)$, where $\mathrm L^2(\mathbb R^2)$ is the space of square integrable functions on $\mathbb R^2$. We assume in the following that $\mathcal K$ satisfies Condition~A from \cite{ShiraiTakahashi}.
The map $K$ is called the {\em kernel} of the $\alpha$-DPP. It represents the interaction force between the different points of the point process. 
A locally
finite and simple point process on $\mathbb R^2$ is called an $\alpha$-DPP if its
correlation functions {\em w.r.t.} the Lebesgue measure on $\mathbb R^2$ exist and
satisfy
\begin{equation}
\label{eq:correlationfunctions}
\rho^{(n)}(x_1,\ldots,x_n)=\mathrm{det}_\alpha(K(x_i,x_j))_{1\leq i,j\leq n},
\end{equation}
for any $n\geq 1$ and $x_1,\ldots,x_n\in\mathbb R^2$, and where the $\alpha$-determinant of a matrix $M=(M_{ij})_{1\le i,j\le n}$ is defined as
\begin{eqnarray}
\mathrm{det}_\alpha\,M = \sum_{\sigma\in S_n} \alpha^{n - \nu(\sigma)} \prod_{i=1}^n M_{i \sigma(i)},
\end{eqnarray}
where $S_n$ stands for the $n$-th symmetric group and $\nu(\sigma)$ is the number of cycles in the permutation $\sigma \in S_n$. 

Let us now give some basic properties of the $\alpha$-DPP to emphasize the role played by the kernel $K$. We start by a proposition exhibiting the repulsion properties of the $\alpha$-DPP. Its proof follows from the formula defining the correlation functions \eqref{eq:defcorrelation}.
\begin{prop}[Repulsion of the $\alpha$-DPP]
 The covariance of an 
 $\alpha$-DPP of kernel $K$ is given by 
\begin{equation*}
\mathrm{Cov}(\mathcal{K}(A),\mathcal{K}(B))=\alpha\int_{A\times B}|K(x,y)|^2\,\mathrm{d}x\mathrm{d}y, 
\end{equation*}
 where 
 $\mathcal{K}(A)$ and $\mathcal{K}(B))$ 
 denote the random number of point process 
 points located within the disjoint bounded sets 
 $A,B\subset\mathbb{R}^2$. 
\end{prop}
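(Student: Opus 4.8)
The plan is to read off the first two correlation functions from \eqref{eq:correlationfunctions} and then subtract. First I would unwind the $\alpha$-determinant at small orders. For $n=1$ the group $S_1$ is trivial, its single element having one cycle, so $\rho^{(1)}(x)=\alpha^{1-1}K(x,x)=K(x,x)$. For $n=2$, the group $S_2$ has two elements: the identity, which has two cycles and contributes $\alpha^{0}K(x,x)K(y,y)$, and the transposition, which has one cycle and contributes $\alpha^{1}K(x,y)K(y,x)$; hence
\begin{equation*}
\rho^{(2)}(x,y)=\mathrm{det}_\alpha\begin{pmatrix}K(x,x)&K(x,y)\\K(y,x)&K(y,y)\end{pmatrix}=K(x,x)K(y,y)+\alpha\,K(x,y)K(y,x).
\end{equation*}
Using that the kernel $K$ is Hermitian (part of Condition~A of \cite{ShiraiTakahashi} assumed throughout), $K(y,x)=\overline{K(x,y)}$, so $K(x,y)K(y,x)=|K(x,y)|^2$, and the above becomes the factorization $\rho^{(2)}(x,y)=\rho^{(1)}(x)\,\rho^{(1)}(y)+\alpha\,|K(x,y)|^2$.

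Next I would exploit the disjointness of $A$ and $B$. Because these sets are bounded and mutually disjoint, \eqref{eq:defcorrelation} applies directly with $n=2$, $B_1=A$, $B_2=B$, giving $\mathbb{E}[\mathcal{K}(A)\mathcal{K}(B)]=\int_{A\times B}\rho^{(2)}(x,y)\,\mathrm{d}x\,\mathrm{d}y$, while with $n=1$ it gives $\mathbb{E}[\mathcal{K}(A)]=\int_A\rho^{(1)}(x)\,\mathrm{d}x$ and $\mathbb{E}[\mathcal{K}(B)]=\int_B\rho^{(1)}(y)\,\mathrm{d}y$. Substituting the factorization of $\rho^{(2)}$ and splitting the product term by Fubini (the integrands are integrable since $A,B$ are bounded and Condition~A ensures local integrability of $K$) yields
\begin{equation*}
\mathbb{E}[\mathcal{K}(A)\mathcal{K}(B)]=\mathbb{E}[\mathcal{K}(A)]\,\mathbb{E}[\mathcal{K}(B)]+\alpha\int_{A\times B}|K(x,y)|^2\,\mathrm{d}x\,\mathrm{d}y,
\end{equation*}
and moving the first term to the left-hand side is exactly the claimed identity for $\mathrm{Cov}(\mathcal{K}(A),\mathcal{K}(B))$.

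I do not expect a genuine obstacle: the whole argument is bookkeeping on a $2\times 2$ $\alpha$-determinant plus one application of \eqref{eq:defcorrelation}. The two places that deserve a line of justification are (i) the Hermitian symmetry of $K$, which is what converts $K(x,y)K(y,x)$ into the squared modulus $|K(x,y)|^2$ appearing in the statement, and (ii) the finiteness of $\mathbb{E}[\mathcal{K}(A)\mathcal{K}(B)]$ so that the covariance is well-defined; both are guaranteed by Condition~A of \cite{ShiraiTakahashi} under which $\mathcal{K}$ is assumed to exist. The crucial structural point is that disjointness of $A$ and $B$ removes any diagonal (self-pair) contribution, so no extra $\rho^{(1)}$-correction term (of the kind appearing in the variance computation $\mathrm{Var}(\mathcal{K}(A))$) enters here.
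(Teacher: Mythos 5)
Your argument is correct and is exactly the route the paper intends: the paper gives no detailed proof, stating only that the result ``follows from the formula defining the correlation functions,'' and your expansion of the $\alpha$-determinant at orders $n=1,2$ together with the disjointness of $A$ and $B$ fills in precisely those details. The two justifications you flag (Hermitian symmetry of $K$ and integrability under Condition~A) are the right ones and are indeed covered by the standing assumptions.
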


Since $\alpha < 0$, $\mathcal{K}(A)$ and $\mathcal{K}(B)$ are negatively correlated and the associated $\alpha$-DPP is known to be locally Gibbsian, see, e.g.,~\cite{GeorgiiYoo}, therefore it is a type of repulsive point process. Additionally, the $\alpha$-DPP exhibits more repulsion when $\alpha$ is close to $-1$. As $\alpha\rightarrow 0$, $\mathcal{K}(A)$ and $\mathcal{K}(B)$ tend not to be correlated, and in fact it can be shown that the corresponding point process converges weakly to the PPP, c.f.~\cite{ShiraiTakahashi}.

Next, we recall from \cite{Soshnikov} the following proposition which gives the hole probabilities of the $\alpha$-DPP.
Proposition~\ref{prop:holeproba} allows us to compute the quantities known as hole probabilities. 
\begin{prop}[Hole probability of the $\alpha$-DPP]
\label{prop:holeproba}
 For every bounded set $B\subset\mathbb{R}^2$ we have 
\begin{equation}
\label{aeq} 
 \mathbb P(\mathcal{K}\cap B = \emptyset ) = \mathrm{Det}(\mathrm{Id}+\alpha K_B)^{-1/\alpha},
\end{equation}
 where $K_B$ is the operator restriction of $K$ to the space $\mathrm{L}^2(B)$ of square integrable functions on $B$ with respect to the Lebesgue measure. Here, $\mathrm{Id}$ is the identity operator on $\mathrm{L}^2(B)$ and for any trace class integral operator $K$, $\mathrm{Det}\left(\mathrm{Id}+\alpha K\right)$ is the Fredholm determinant of $\mathrm{Id}+\alpha K$ defined in \cite{Brezis}.

\end{prop}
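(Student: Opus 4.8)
The plan is to derive \eqref{aeq} from the inclusion--exclusion expansion of the void probability in terms of the correlation functions, and then to resum the resulting series using the cycle structure of the $\alpha$-determinant. First I would recall the classical identity expressing a void probability through correlation functions: writing $\mathbf 1\{\mathcal K\cap B=\emptyset\}=\prod_{x\in\mathcal K}(1-\mathbf 1_B(x))$ and expanding, one obtains
\[
\mathbb P(\mathcal K\cap B=\emptyset)=\sum_{n=0}^{\infty}\frac{(-1)^n}{n!}\int_{B^n}\rho^{(n)}(x_1,\dots,x_n)\,\mathrm dx_1\cdots\mathrm dx_n,
\]
valid whenever the right-hand series converges absolutely. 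For the $\alpha$-DPP this is guaranteed by Condition~A of \cite{ShiraiTakahashi}, which in particular makes $K_B$ trace class and keeps $\mathrm{Id}+\alpha K_B$ invertible. Substituting \eqref{eq:correlationfunctions}, the $n$-th term becomes $\frac{(-1)^n}{n!}\int_{B^n}\mathrm{det}_\alpha(K(x_i,x_j))_{1\le i,j\le n}\,\mathrm dx_1\cdots\mathrm dx_n$.

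Next I would compute $\int_{B^n}\mathrm{det}_\alpha(K(x_i,x_j))\,\mathrm dx$ by expanding $\mathrm{det}_\alpha$ over permutations $\sigma\in S_n$ and integrating term by term: a cycle of $\sigma$ of length $\ell$ produces the factor $\int_{B^\ell}K(y_1,y_2)K(y_2,y_3)\cdots K(y_\ell,y_1)\,\mathrm dy=\mathrm{Tr}(K_B^{\ell})$. Grouping permutations by cycle type (a permutation with $m_\ell$ cycles of length $\ell$ satisfies $n-\nu(\sigma)=\sum_\ell(\ell-1)m_\ell$, and there are $n!/\prod_\ell \ell^{m_\ell}m_\ell!$ of them), the exponential formula gives, for $|t|$ small,
\[
\sum_{n=0}^{\infty}\frac{t^n}{n!}\int_{B^n}\mathrm{det}_\alpha(K(x_i,x_j))\,\mathrm dx_1\cdots\mathrm dx_n=\exp\Big(\sum_{\ell\ge1}\frac{\alpha^{\ell-1}t^{\ell}}{\ell}\mathrm{Tr}(K_B^{\ell})\Big)=\mathrm{Det}(\mathrm{Id}-\alpha tK_B)^{-1/\alpha},
\]
where the last step uses $\sum_{\ell\ge1}\frac{u^{\ell}}{\ell}\mathrm{Tr}(K_B^{\ell})=-\log\mathrm{Det}(\mathrm{Id}-uK_B)$ with $u=\alpha t$, so that the exponent equals $-\tfrac1\alpha\log\mathrm{Det}(\mathrm{Id}-\alpha tK_B)$. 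Both sides are analytic in $t$ and extend to $t=-1$, which gives exactly \eqref{aeq}. As a sanity check, at $\alpha=-1$ this collapses to the familiar $\mathbb P(\mathcal K\cap B=\emptyset)=\mathrm{Det}(\mathrm{Id}-K_B)$ for determinantal processes.

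The algebra above is formal, so the real obstacle is analytic: I would need (i) to justify the inclusion--exclusion formula and the absolute convergence of its series, which requires a uniform Hadamard-type (or Gram-type) bound on $|\mathrm{det}_\alpha(K(x_i,x_j))|$ combined with the trace class property of $K_B$, and (ii) to verify that $\mathrm{Det}(\mathrm{Id}+\alpha K_B)\neq 0$ so that $(\cdot)^{-1/\alpha}$ is meaningful, and that the generating-function identity, first obtained on a disc $|t|<R$, can be continued up to $t=-1$. For the parameters $\alpha=-1/j$ with $j\in\mathbb N^{*}$ that are actually used here, a shorter route is available: the $(-1/j)$-DPP with kernel $K$ is distributed as the superposition of $j$ independent determinantal processes with kernel $K/j$ (a fact one checks directly on the correlation functions), so $\mathbb P(\mathcal K\cap B=\emptyset)$ factorizes over the $j$ independent copies, and the classical determinantal void formula applied to each factor yields $\mathrm{Det}(\mathrm{Id}-(K/j)_B)^{j}=\mathrm{Det}(\mathrm{Id}+\alpha K_B)^{-1/\alpha}$ at once.
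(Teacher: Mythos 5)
Your argument is correct, but note that the paper does not actually prove Proposition~\ref{prop:holeproba}: it is explicitly \emph{recalled} from \cite{Soshnikov} (with the underlying framework from \cite{ShiraiTakahashi}), so there is no in-paper proof to compare against. What you have written is essentially the standard derivation from those references: the inclusion--exclusion expansion of the void probability in terms of the correlation functions, followed by the cycle-type resummation of $\int_{B^n}\mathrm{det}_\alpha(K(x_i,x_j))\,\mathrm dx$ into $\exp\bigl(\sum_{\ell\ge 1}\alpha^{\ell-1}t^{\ell}\,\mathrm{Tr}(K_B^{\ell})/\ell\bigr)=\mathrm{Det}(\mathrm{Id}-\alpha t K_B)^{-1/\alpha}$, evaluated at $t=-1$. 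The combinatorics (the identity $n-\nu(\sigma)=\sum_\ell(\ell-1)m_\ell$, the count $n!/\prod_\ell \ell^{m_\ell}m_\ell!$, and the trace of a cycle) are all right, and the analytic caveats you flag --- absolute convergence via a Hadamard/Gram bound plus local trace-classness, and nonvanishing of $\mathrm{Det}(\mathrm{Id}+\alpha K_B)$ --- are precisely what Condition~A of \cite{ShiraiTakahashi} is designed to guarantee (for the Ginibre kernel on $\mathcal B(0,R)$ the eigenvalues $\Gamma(n+1,\pi\rho R^2)/n!$ lie strictly in $[0,1)$, so no degeneracy arises). Your second route is the cleanest for the parameters actually used here: for $\alpha=-1/j$ the $\alpha$-DPP is the superposition of $j$ i.i.d.\ determinantal processes with kernel $K/j$ (checked directly on $\rho^{(1)}$ and $\rho^{(2)}$, and in general in \cite{ShiraiTakahashi}), whence $\mathbb P(\mathcal K\cap B=\emptyset)=\mathrm{Det}(\mathrm{Id}-(K/j)_B)^{j}=\mathrm{Det}(\mathrm{Id}+\alpha K_B)^{-1/\alpha}$ with no resummation or analytic continuation needed. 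Either route is a legitimate proof of the statement the paper imports as a black box.
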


\subsubsection{The Ginibre point process}

In the rest of the paper, we focus on the Ginibre $\alpha$-DPP, which is a particular $\alpha$-DPP well-suited for applications. The Ginibre process is a type of $\alpha$-DPP that is invariant with respect to rotations. Therefore, it will be fruitful for computational convenience to restrict our attention to the choice of observation window $O=\mathcal B(0,R)$, defined as a disc centered around $0$ and of radius $R>0$.

The Ginibre process is defined by the so-called Ginibre kernel given by
\begin{equation}
\label{eq:ginibre}
K(x,y)=\rho\,e^{\pi\rho x \bar{y}} e^{-\frac{\pi\rho}{2}( |x|^2 + |y|^2)},
\quad 
 x,y \in O=\mathcal{B}(0,R),
\end{equation}
where $\rho>0$ is a fixed parameter called {\em density} of the point process.
This kernel is that of the usual Ginibre process defined, e.g., in~\cite{DecreusefondFlintVergne}, to which we have applied a homothety of parameter $\sqrt{\pi\rho}>0$: $x\mapsto x/(\sqrt{\pi\rho})$. Next we recall a few features of the Ginibre process. 
\begin{itemize} 
\item
The Ginibre kernel $K$ defined in \eqref{eq:ginibre} satisfies Condition~A from \cite{ShiraiTakahashi}, and is thus a type of $\alpha$-DPP.
\item The intensity function of the Ginibre process 
 is 
 given by 
\begin{equation} 
\label{beq} 
\rho^{(1)}(x)=K(x,x)= \rho, 
\end{equation} 
 c.f.~\cite{ShiraiTakahashi}. This means that the average number of points in a bounded set $B\subset \mathcal{B}(0,R)$ is $\rho\,|B|$. Note that the intensity function of a homogeneous PPP is also constant, so $\rho$ is interpreted as the intensity of the corresponding PPP.
\\ 
 
\item 
The Ginibre $\alpha$-DPP is stationary and isotropic, in the sense that its distribution is invariant with respect to translations and rotations, c.f.~\cite{DecreusefondFlintVergne}. Hence, the Ginibre point process models a situation where the RF energy sources are distributed homogeneously in $\mathbb R^2$.
\end{itemize} 

 We write $\mathcal{K}\sim\mathrm{Gin}(\alpha,\rho)$ 
 when $\mathcal{K}$ is an $\alpha$-DPP with Ginibre
 kernel defined in \eqref{eq:ginibre} and density $\rho$. 
 The spectral theorem for Hermitian and compact operators 
 yields the decomposition 
$
K(x,y)=\sum_{n\ge 0} \lambda_n \varphi_n(x)\overline{\varphi_n(y)},
$
where $(\varphi_i)_{i\ge 0}$ is a basis of eigenvectors of $\mathrm{L}^2(O)$, and $(\lambda_i)_{i\ge 0}$ are the corresponding eigenvalues. 
 In, e.g.,~\cite{DecreusefondFlintVergne}, it is shown that
 the eigenvalues of the Ginibre point process on $O=\mathcal{B}(0,R)$ 
 are given by 
\begin{equation}
\label{eq:eigenvalues}
\lambda_n = \frac{\Gamma(n+1, \pi\rho R^2)}{n!},\qquad n\in\mathbb N,
\end{equation}
 where 
\begin{equation}
\label{eq:defgamma}
\Gamma(z,a) \triangleq \int_0^a e^{-t} t^{z-1}\,\mathrm{d}t, 
\qquad 
z \in \mathbb{C}, \quad a \ge 0, 
\end{equation}
 is the lower incomplete Gamma function. 
 On the other hand, the eigenvectors of $K$ are given by
$
\varphi_n(z) \triangleq  \frac{1}{\sqrt{\lambda_n}}\frac{\sqrt{\rho}}{\sqrt{ n!}} e^{-\frac{\pi\rho}{2} | z |^2} (\sqrt{\pi\rho} z)^n, 
$
for $n\in\mathbb N$ and $z\in O$.
We refer to~\cite{DecreusefondFlintVergne} for 
further mathematical details on the Ginibre point process. 

To illustrate how the parameter $\alpha$ affects the distribution of the DPP, in Fig.~\ref{Distribution} we show some snapshots of the scattering of ambient RF energy sources in a disc of radius $R=10$, when the RF source density is $\rho=0.3$. It is seen that strong repulsion exists between the RF sources when $\alpha=-1$. As a result, the RF sources tend to scatter evenly over the area. We can observe that the repulsion decreases very fast with the increase of $\alpha$. When $\alpha=-0.5$, some of the RF sources exhibit attraction by locating close to each other. Some grids of vacant area begin to emerge. The attraction
keeps increasing as $\alpha$ approaches zero. When $\alpha=-0.03$, the RF sources show clustering behavior, which is a feature of the PPP. Consequently, there appear many grids of vacant area. Thus, depending on the distribution of RF sources that is observed, we shall choose a different value of the parameter $\alpha$ that appropriately models the situation at hand. 

\begin{figure*}[t]
\begin{center}
$\begin{array}{ccc} 
\epsfxsize=2.0 in \epsffile{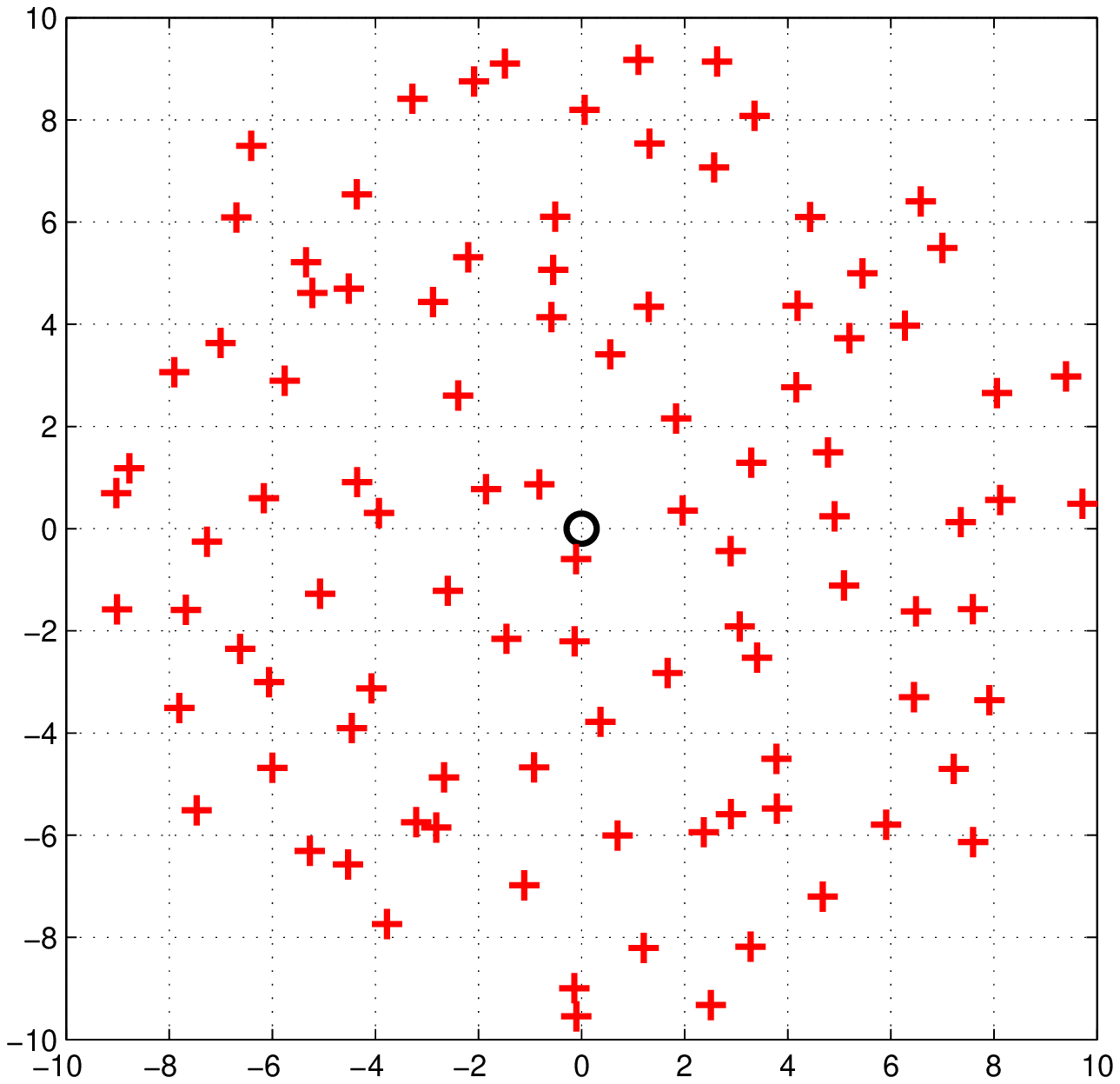}	&
\epsfxsize=2.0 in \epsffile{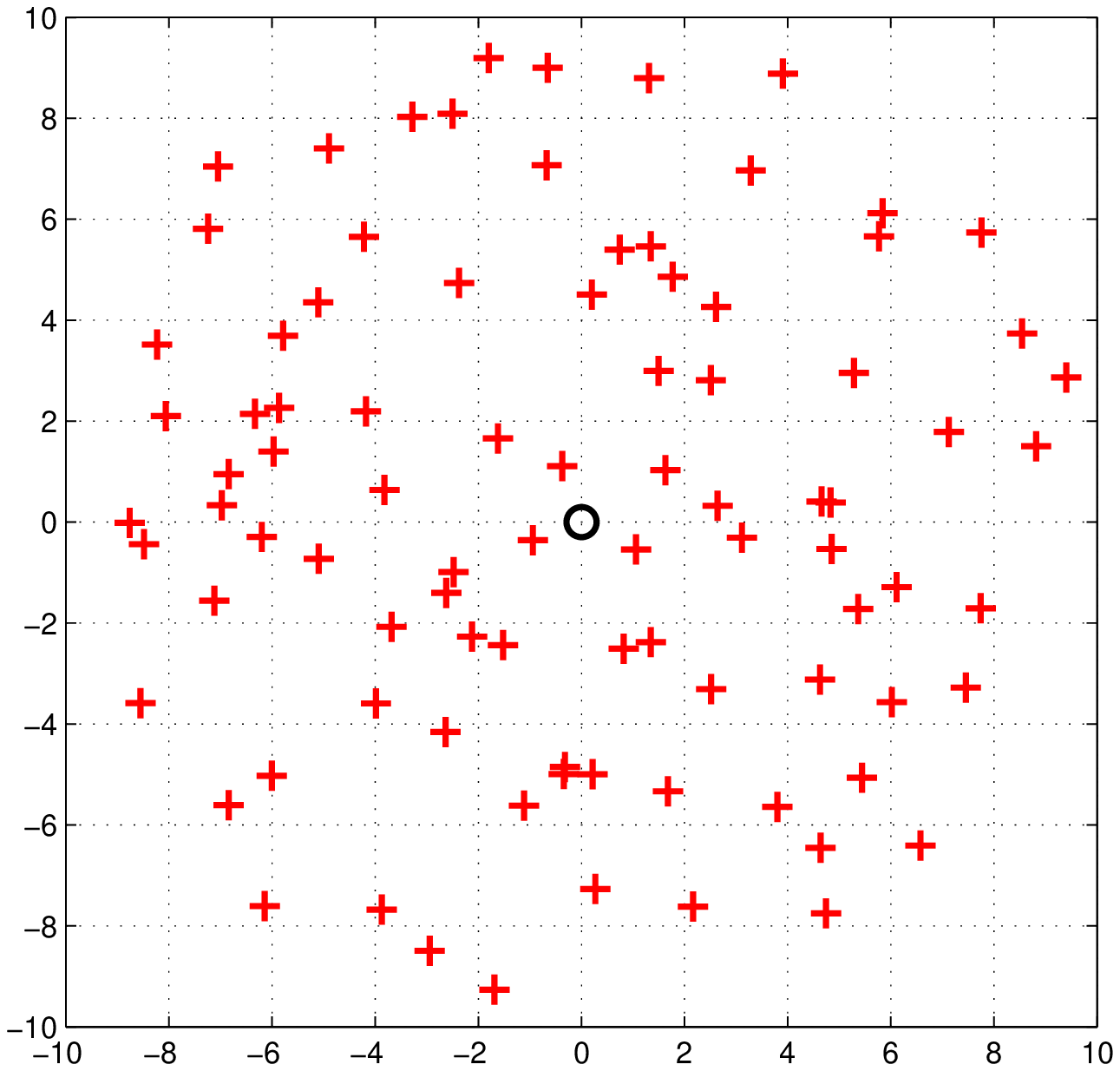}	&
\epsfxsize=2.0 in \epsffile{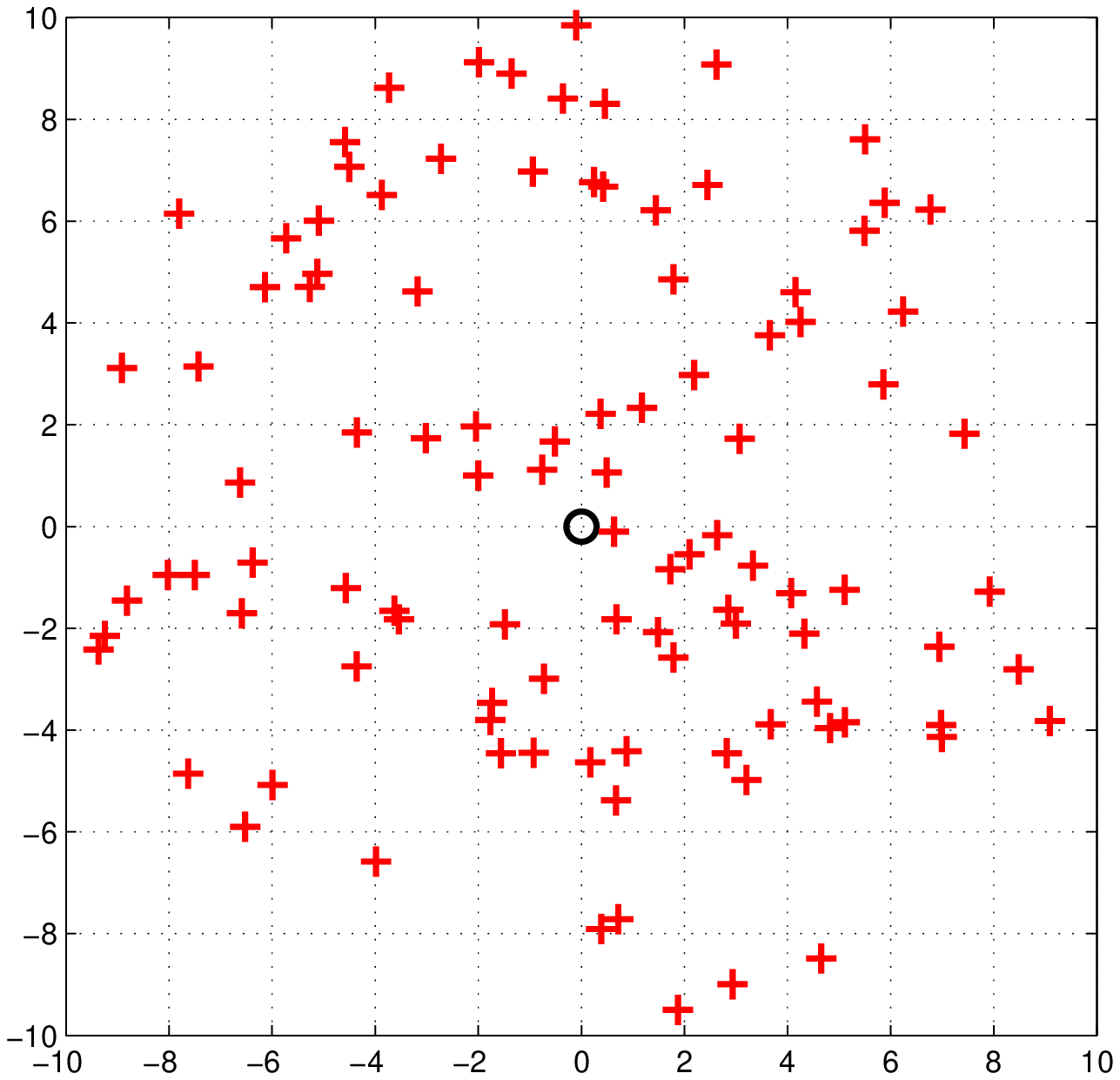}	\\ [-0.2cm]
(a)	& (b)	&	(c)
\end{array}$
\caption{Snapshots of the distribution of ambient RF energy sources (a) $\alpha=-1$ (b) $\alpha=-0.5$ (c) $\alpha=-0.03$.}
\label{Distribution}
\end{center}
\end{figure*} 

\subsection{Performance Metrics}
\label{sec:metrics}

We define the performance metrics of the sensor node as the expectation of RF energy harvesting rate, the variance of the RF energy harvesting rate, power outage probability and transmission outage probability. Let us first introduce the mathematical quantities of interest.

\subsubsection{Theoretical values}
\label{subsec:theoretical}
The expectation of the RF energy harvesting rate is defined as 
$
E_{P_{\mathrm H}} \triangleq \mathbb{E} \left[ P_{\mathrm{H}} \right].
$
 
The variance of RF energy harvesting rate is given by 
$
V_{P_{\mathrm H}} \triangleq\mathbb E\left[\left(P_{\mathrm{H}}-\mathbb{E} \left[ P_{\mathrm{H}} \right]\right)^2\right]. 
$


Power outage occurs when the sensor node becomes inactive due to lack of enough energy supply. The power outage probability is then defined as 
$
	P_{po}	\triangleq	\mathbb{P} \left(	 P_{\mathrm{H}}	<	P_{\mathrm{C}}	\right). 
$

Let $m \ge 0$ denote the minimum transmission rate requirement. If the sensor fails to achieve this requirement, a transmission outage occurs. The transmission outage probability can be defined as 
$
	P_{to} \triangleq	\mathbb{P} \left(	C	<	m	\right). 
$


\subsubsection{Estimation by simulation}

The different theoretical performance metrics introduced in Section~\ref{subsec:theoretical} may in practice be estimated by Monte Carlo simulation of the underlying $\alpha$-DPP. The simulation of $\alpha$-DPPs when $\alpha=-1/j$, $j\in\mathbb N $, is done by using the Schmidt orthogonalization algorithm developed in full generality in~\cite{Hough}, and specifically in~\cite{DecreusefondFlintVergne} for the Ginibre point process. The simple generalization to $\alpha=-1/j$ can be found in the recent survey~\cite{DecreusefondFlintPrivaultTorrisi}, and additional details on DPP can be found in~\cite{DecreusefondFlintPrivaultTorrisi2}.

\subsubsection{Upper bounds under a worst-case scenario}
\label{subsec:worstcase}

In practice, there are no closed forms for all the performance metrics appearing in Section~\ref{subsec:theoretical}. Additionally, estimation by simulation suffers from some drawbacks: 1) the time required to draw $N$ samples of $P_{\mathrm H}$ can be rather long; 2) the estimation may differ significantly from the theoretical result (if $N$ is not sufficiently large);
3)	it is difficult to modify the parameters retroactively, and a new set of simulations is then required.
These drawbacks motivate the computation of upper bounds which is the object of the remainder of the paper.

To that end, we introduce a simpler scenario, which is henceforth called worst-case scenario.
In this worst-case scenario, the sensor node only receives energy from one RF source at a time. In this simpler case, there is energy (respectively transmission) outage if and only if the closest RF source is further than some characteristic distance.
Figure \ref{example} illustrates the difference between the general-case scenario, and the worst-case scenario. $\gamma$ represents the maximum distance from the sensor node where a single RF energy source can still power the sensor node by itself. 
In Fig.~\ref{example1}, as an RF energy source lies in the range of $\gamma$, sufficient power is guaranteed from this single source. As a result, both the general-case scenario and worst-case scenario experience no outage. If there exists no RF energy source in the range of $\gamma$, the sensor may still be powered if the sum of harvested energy from multiple RF energy sources is large enough. In this context, the outage largely depends on the density of RF energy sources. When the density is high, as in Fig.~\ref{example2}, the sensor shall experience no outage in general-case scenarios. However, in the worst-case scenario, 
outage occurs. Contrarily, when the density is low, as in Fig.~\ref{example3} the sensor does not harvest enough energy in both general-case and worse-case scenarios.        


\begin{figure} 
\centering
\subfigure [General Case: No Outage, Worst Case: No Outage] {
 \label{example1}
 \centering
 \includegraphics[width=0.31 \textwidth]{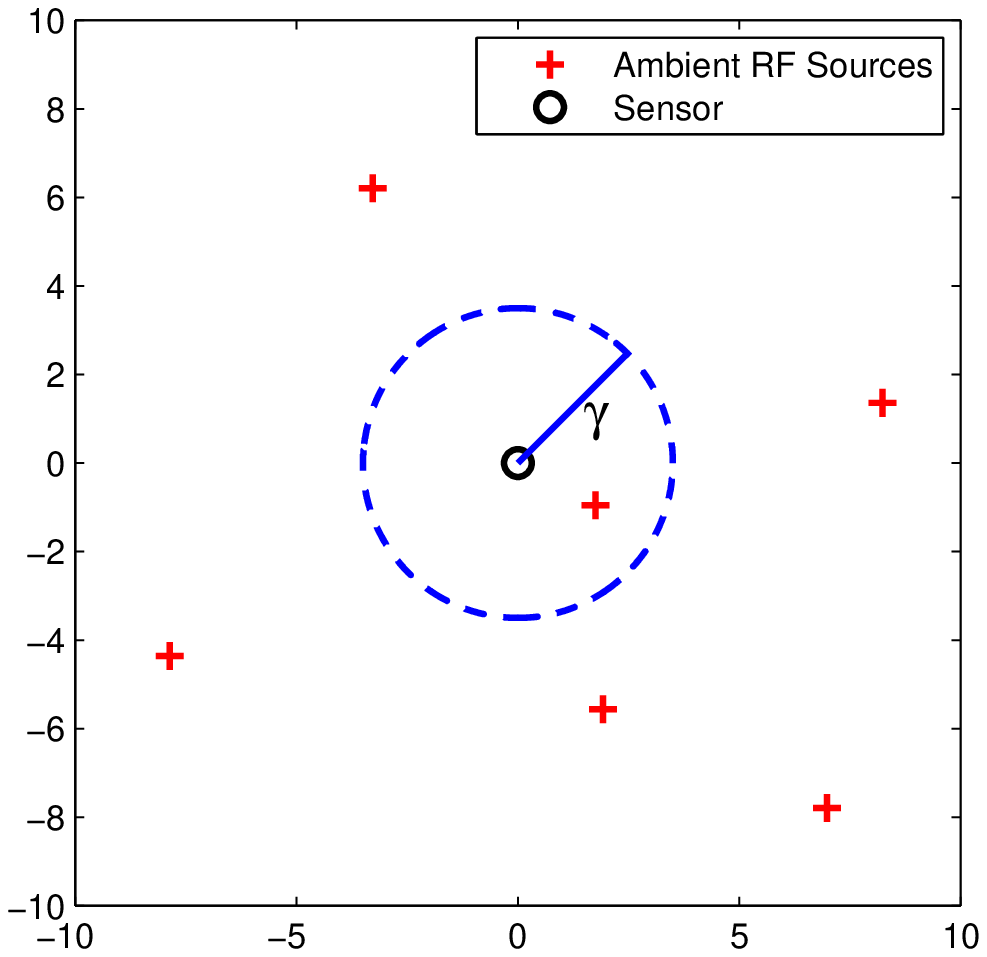}} 
 \centering
 \subfigure [General Case: No Outage, Worst Case: Outage (High RF Source Density)] {
 \label{example2}
    \centering
    \includegraphics[width=0.31 \textwidth]{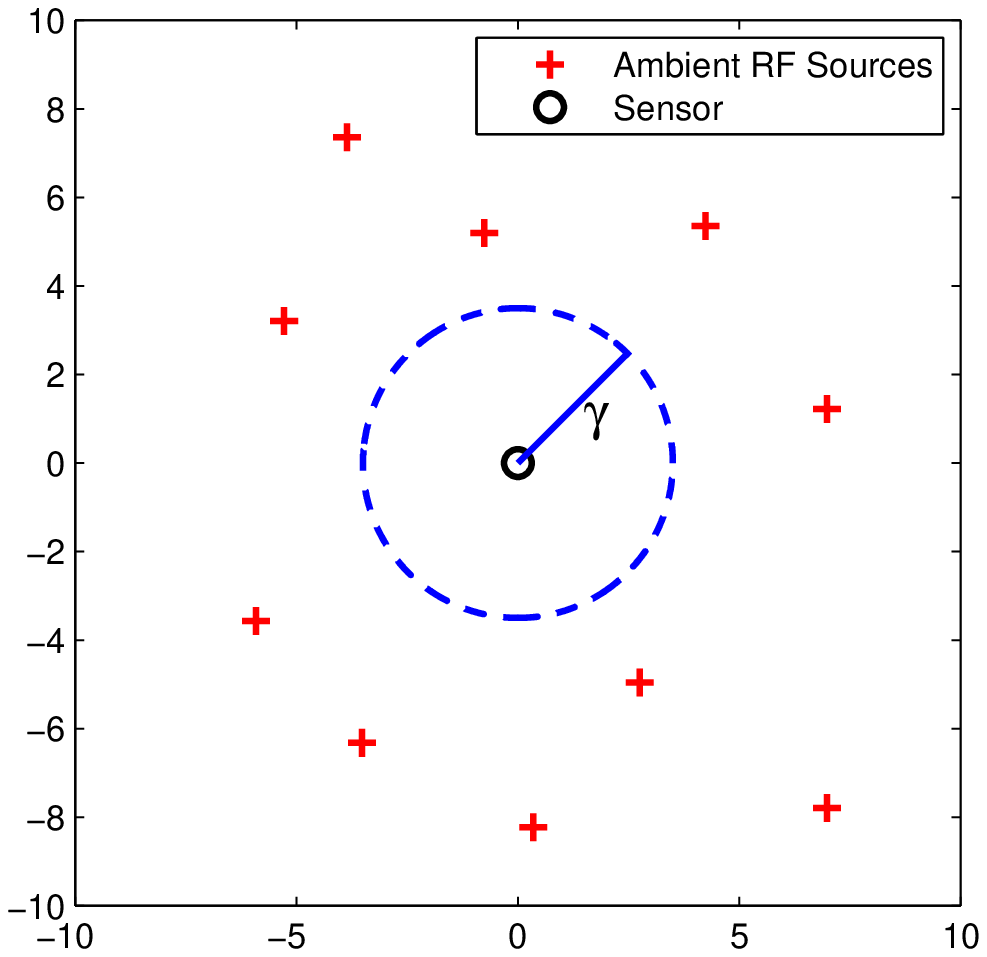}} 
   \centering
 \subfigure [General Case: Outage, Worst Case: Outage (Low RF Source Density)] {
 \label{example3}
    \centering
    \includegraphics[width=0.31 \textwidth]{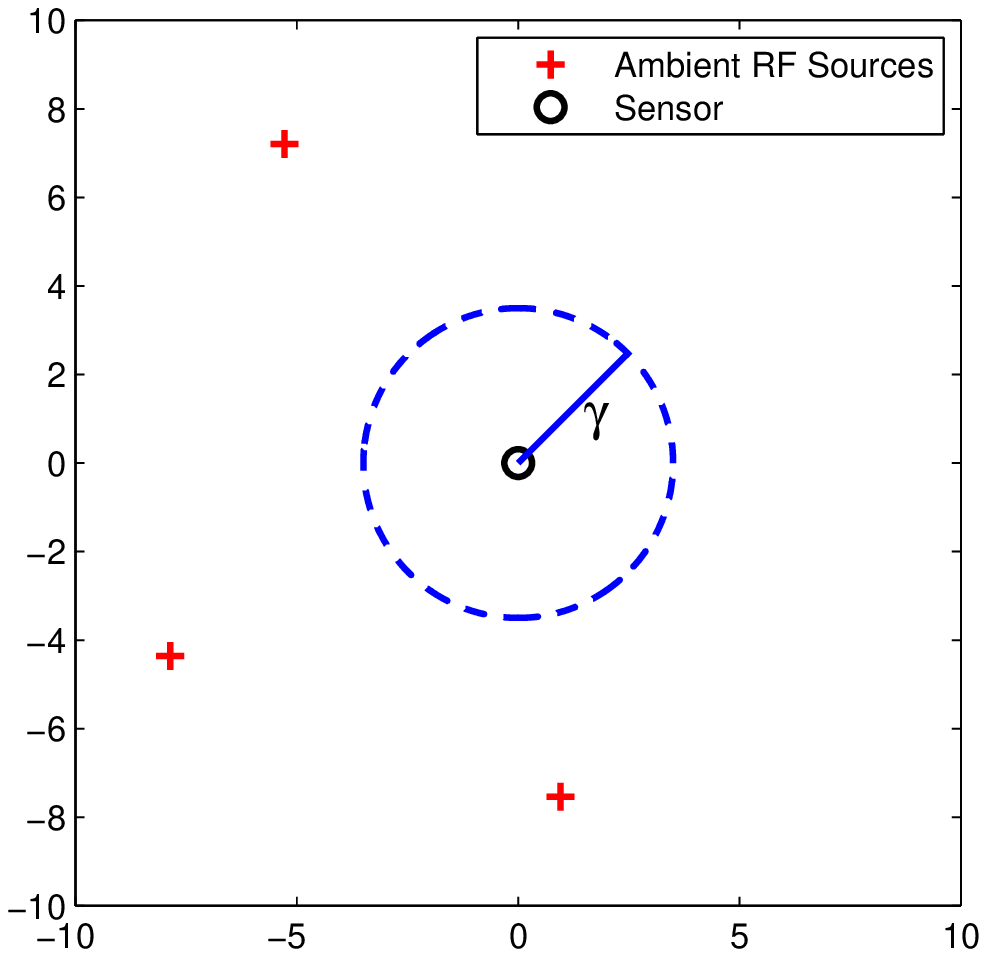}}
    \centering
  \caption{Example scenarios for ambient RF energy harvesting.}
\label{example}
\end{figure}  

As illustrated here, our worst-case scenario over-estimates the outage probability. Indeed in Fig.~\ref{example2}, there is an outage in the worst-case scenario, although there is not in the general scenario. 
Therefore, the performance metrics in this worst-case scenario over-estimate the real performance metrics. this will be the focus of Section~\ref{sec:Analysis} in which we show that this worst-case scenario constitutes an upper-bound to the outage probabilities.

\section{Performance Analysis}
\label{sec:Analysis}

In this section we analyze the performance metrics defined in Section~\ref{sec:metrics} when $\mathcal{K}\sim\mathrm{Gin}(\alpha,\rho)$ is the Ginibre $\alpha$-DPP with parameter $\alpha=-1/j$ for $j\in\mathbb{N}^*$, and density $\rho>0$. 

\subsection{The Expectation and Variance of RF Energy Harvesting Rate}

First, we obtain the expectation of RF energy harvesting rate in the following theorem.
\begin{theorem}
\label{thm:expectedharvestedenergy}
The expectation of RF energy harvesting rate can be explicitly computed as follows\footnote{Here, we say that $f\approx_{\epsilon\rightarrow 0}g$ if $f/g\xrightarrow[\epsilon\to0]{}1$.}:
\begin{align}
\mathbb{E}[P_{\mathrm H}] &=2 \pi\varrho\beta P_{\mathrm{S}} \frac{G_{\mathrm{S}}G_{\mathrm{H}} \lambda^{2}}{(4\pi )^{2}}
\rho\left(\frac{\epsilon}{R+\epsilon}+\ln(R+\epsilon)-1-\ln(\epsilon)\right) \label{eq:average_energy} \\
&\approx_{\epsilon\rightarrow 0} 
\frac{\rho\varrho\beta P_{\mathrm{S}} G_{\mathrm{S}}G_{\mathrm{H}} \lambda^{2}}{8\pi}\ln\left(\frac R\epsilon\right)		.	\label{eq:app_average_energy}
\end{align}
Additionally, the variance of the RF energy harvesting rate can be computed as follows:
\begin{multline}
\label{eq:variance_harvested}
V_{P_H}=\left(\varrho \beta P_{\mathrm{S}} \frac{G_{\mathrm{S}} G_{\mathrm{H}} \lambda^{2}}{(4\pi )^{2}}\right)^2  \\
\left(2\pi\rho\left(\frac{1}{6\epsilon^2}-\frac{3R+\epsilon}{6(R+\epsilon)^3}\right)+\alpha\rho^2\int_{O\times O}\frac{e^{-\pi\rho\lVert \mathbf x-\mathbf y\lVert^2}}{\left(\epsilon+\lVert \mathbf x\lVert\right)^2\left(\epsilon+\lVert \mathbf y\lVert\right)^2}\,\mathrm d\mathbf x\mathrm d\mathbf y\right),
\end{multline}
where recall that $O=\mathcal B(0,R)$ is the observation window.
\end{theorem}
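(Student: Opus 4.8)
The plan is to write $P_{\mathrm H}$ as a linear statistic of the point process $\mathcal K$ and then invoke the Campbell-type moment formulas encoded by the correlation functions \eqref{eq:defcorrelation}. Set $c\triangleq\varrho\beta P_{\mathrm S}G_{\mathrm S}G_{\mathrm H}\lambda^2/(4\pi)^2$ and $f(\mathbf x)\triangleq(\epsilon+\lVert\mathbf x\rVert)^{-2}$, so that \eqref{eq:totalamountofpower} reads $P_{\mathrm H}=c\sum_{k\in\mathcal K}f(\mathbf x_k)$. Since $O=\mathcal B(0,R)$ is bounded and $f$ is bounded on $O$ (this is exactly where the parameter $\epsilon>0$ is used), $\mathcal K$ is almost surely finite and every moment below is finite, so the interchanges of summation and expectation are licit.

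\textbf{Expectation.} I would apply \eqref{eq:defcorrelation} with $n=1$ to get $\mathbb E[P_{\mathrm H}]=c\int_O f(\mathbf x)\,\rho^{(1)}(\mathbf x)\,\mathrm d\mathbf x$. For $\mathcal K\sim\mathrm{Gin}(\alpha,\rho)$ the intensity is constant, $\rho^{(1)}(\mathbf x)=K(\mathbf x,\mathbf x)=\rho$ by \eqref{beq}, hence $\mathbb E[P_{\mathrm H}]=c\rho\int_O(\epsilon+\lVert\mathbf x\rVert)^{-2}\mathrm d\mathbf x$. Passing to polar coordinates reduces this to $2\pi c\rho\int_0^R r(\epsilon+r)^{-2}\mathrm dr$; with $u=\epsilon+r$ the integrand becomes $u^{-1}-\epsilon u^{-2}$, whose primitive is $\ln u+\epsilon/u$, and evaluating between $\epsilon$ and $R+\epsilon$ gives \eqref{eq:average_energy}. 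The approximation \eqref{eq:app_average_energy} then follows by keeping only the leading contribution as $\epsilon\to0$: $\epsilon/(R+\epsilon)\to0$ and $\ln(R+\epsilon)-1-\ln\epsilon=\ln(R/\epsilon)-1\sim\ln(R/\epsilon)$, while $2\pi c=\varrho\beta P_{\mathrm S}G_{\mathrm S}G_{\mathrm H}\lambda^2/(8\pi)$.

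\textbf{Variance.} I would split the second moment of the linear statistic into its diagonal and off-diagonal parts,
\begin{equation*}
\mathbb E\Big[\Big(\sum_{k\in\mathcal K}f(\mathbf x_k)\Big)^2\Big]=\int_O f(\mathbf x)^2\rho^{(1)}(\mathbf x)\,\mathrm d\mathbf x+\int_{O\times O}f(\mathbf x)f(\mathbf y)\rho^{(2)}(\mathbf x,\mathbf y)\,\mathrm d\mathbf x\,\mathrm d\mathbf y.
\end{equation*}
By \eqref{eq:correlationfunctions} the $2\times2$ $\alpha$-determinant expands as $\rho^{(2)}(\mathbf x,\mathbf y)=K(\mathbf x,\mathbf x)K(\mathbf y,\mathbf y)+\alpha K(\mathbf x,\mathbf y)K(\mathbf y,\mathbf x)=\rho^2+\alpha|K(\mathbf x,\mathbf y)|^2$ (using that the kernel is Hermitian, so $K(\mathbf y,\mathbf x)=\overline{K(\mathbf x,\mathbf y)}$), and a direct computation from \eqref{eq:ginibre} gives $|K(\mathbf x,\mathbf y)|^2=\rho^2 e^{2\pi\rho\,\mathrm{Re}(x\bar y)-\pi\rho(|x|^2+|y|^2)}=\rho^2 e^{-\pi\rho\lVert\mathbf x-\mathbf y\rVert^2}$. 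Subtracting $(\mathbb E[\sum_k f(\mathbf x_k)])^2=\rho^2\int_{O\times O}f(\mathbf x)f(\mathbf y)\,\mathrm d\mathbf x\,\mathrm d\mathbf y$ cancels the $\rho^2$ term and leaves $V_{P_H}=c^2\big(\rho\int_O f^2+\alpha\rho^2\int_{O\times O}f(\mathbf x)f(\mathbf y)e^{-\pi\rho\lVert\mathbf x-\mathbf y\rVert^2}\,\mathrm d\mathbf x\,\mathrm d\mathbf y\big)$. Finally $\int_O f(\mathbf x)^2\mathrm d\mathbf x=2\pi\int_0^R r(\epsilon+r)^{-4}\mathrm dr$, and the same substitution $u=\epsilon+r$, with primitive $-\tfrac12u^{-2}+\tfrac{\epsilon}{3}u^{-3}$, produces $2\pi\big(\tfrac{1}{6\epsilon^2}-\tfrac{3R+\epsilon}{6(R+\epsilon)^3}\big)$ after combining the fractions; this is exactly \eqref{eq:variance_harvested}.

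There is no genuine obstacle here: the argument is a routine application of the first two correlation functions of the Ginibre $\alpha$-DPP together with elementary single-variable integration. The steps deserving a little care are the combinatorial expansion of $\mathrm{det}_\alpha$ for $n=2$ and the algebraic identity $|K(\mathbf x,\mathbf y)|^2=\rho^2 e^{-\pi\rho\lVert\mathbf x-\mathbf y\rVert^2}$ — this is the precise point where the repulsion parameter $\alpha<0$ enters the variance with a negative coefficient, so that stronger repulsion genuinely reduces the fluctuations of $P_{\mathrm H}$ relative to the Poisson case — together with the observation that the residual double integral has no elementary closed form (by isotropy it can be reduced to a one-dimensional integral, but this is not needed for the statement), so it is left as is.
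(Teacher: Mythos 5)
Your proposal is correct and follows essentially the same route as the paper: Campbell's formula with the constant intensity $\rho^{(1)}=\rho$ and a polar-coordinate integration for the expectation, and the standard second-moment (diagonal plus off-diagonal) decomposition with $\rho^{(2)}(\mathbf x,\mathbf y)=\rho^2+\alpha|K(\mathbf x,\mathbf y)|^2$ and $|K(\mathbf x,\mathbf y)|^2=\rho^2 e^{-\pi\rho\lVert\mathbf x-\mathbf y\rVert^2}$ for the variance, which is exactly the Daley--Vere-Jones formula the paper invokes in Appendix I. All the intermediate integrals check out, so there is nothing to add.
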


Before moving on to the proof, a few remarks are in order. First, we note that Theorem~\ref{thm:expectedharvestedenergy} implies that at the level of expectations, the Ginibre $\alpha$-DPP behaves like a homogeneous PPP and in particular, the expectation of RF energy harvesting rate is independent of the repulsion parameter $\alpha$. Therefore, on average, the harvested energy is the same when $\alpha$ varies. However, it is straightforward from \eqref{eq:variance_harvested} that the variance of the RF energy harvesting rate is larger when the point process is closer to a PPP. Heuristically, there is a larger probability that there are no points close to the sensor when the RF sources are distributed as a PPP. 
Second, notice that the second term in \eqref{eq:variance_harvested} is in fact not a closed form. To the best of our knowledge, the second term cannot be explicitly calculated, but should instead be approximated numerically.
\begin{proof}
We have
\begin{equation*}
\mathbb{E}[P_{\mathrm H}]=\beta P_{\mathrm{S}} \frac{G_{\mathrm{S}}G_{\mathrm{H}} \lambda^{2}}{(4\pi )^{2}} \int_{O} \frac{\rho^{(1)}(x)}{(\epsilon+\|x\|)^2}\,\mathrm{d}x
\end{equation*}
by Campbell's formula~\cite{Kallenberg}, where 
$\rho^{(1)}(x)=K(x,x)=\rho$ 
 is the intensity function of $\mathcal{K}$ given by \eqref{beq}.
We thus find
\begin{equation*}
\mathbb{E}[P_{\mathrm H}]=\beta  P_{\mathrm{S}} \frac{G_{\mathrm{S}}G_{\mathrm{H}}  \lambda^{2}}{(4\pi )^{2}} 2 \pi \int_{0}^R \rho \frac{r}{(\epsilon+r)^2}\,\mathrm{d}r,
\end{equation*}
 by polar change of variable,
 and the integral on the r.h.s. is computed explicitly as
\begin{equation*}
 \int_{0}^R\frac{r}{(\epsilon+r)^2}\,\mathrm{d}r=
\left(\frac{\epsilon}{R+\epsilon}+\ln(R+\epsilon)-1-\ln(\epsilon)\right),
\end{equation*}
which yields the result.

For brevity, the proof of the expression of variance \eqref{eq:variance_harvested} is presented in {\bf Appendix I}
\end{proof}

\begin{table}
\centering
\caption{\footnotesize Parameter Setting.} \label{parameter_setting}
\begin{tabular}{|l|l|l|l|l|l|l|l|} 
\hline
Symbol & $G^{i}_{S}$, $G^{k}_{S}$ & $\beta$  & $P^{k}_{S}$ & $W$ & $\lambda_{k}$ & $P_{C}$ & $\sigma^2$ \\ 
\hline
Value  & $1.5$ &  $0.3$ & $1W$  & $1KHz$  &  $0.167m$ & $15.8\mu W$ & $-90dBm$ \\
\hline              
\end{tabular}
\end{table}

\begin{figure}
\centering
\includegraphics[width=0.45\textwidth]{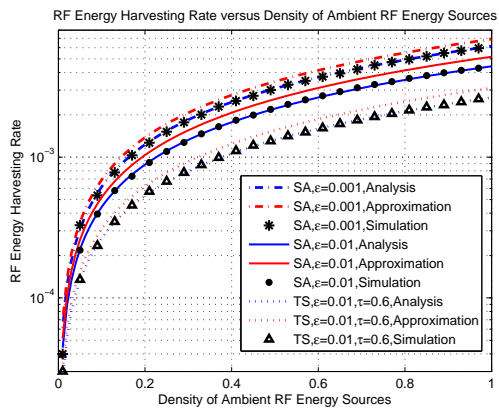}
\caption{RF Energy Harvesting Rate versus Density of Ambient RF Energy Sources. } \label{fig:amount_of_energy}
\end{figure}

Next, we examine the validity of the expressions of the expectation and variance of RF energy harvesting rate. All the network simulations in this paper are considered in the scenario of an LTE network with a typical 1800$MHz$ operating frequency. The corresponding wave length adopted is 0.167$m$. 
The channel gain of both transmit antenna and receive antenna are assumed to be 1.5. The RF-to-DC power conversion efficiency is assumed to be 30$\%$. We consider the transmit power of ambient RF sources is 1$W$. The circuit power consumption of the sensor is fixed to be -18$dBm$ (i.e., 15.8$\mu W$) as in~\cite{N.2013Parks}.
The sensor is assumed to be allocated with a 1$kHz$ bandwidth for data transmission. The AWGN is considered to be -90$dBm$.
The channel gain between the sensor and data sink is calculated as $h_{0}=62.5d^{-4}$~\cite{2014X.Lu}, where $d$ is the distance between the sensor node and the data sink. The results with the separated and time-switching receiver architectures are labeled as ``SA" and ``TS", respectively. Note that the results for the PPP are identical to that of the $\alpha$-DPP, when $\alpha=0$. Additionally, the performance of separated receiver architecture in terms of expectation and variance of RF energy harvesting rate and power outage probability is identical to the case when $\tau=1$ for time-switching. 

As shown in Fig. \ref{fig:amount_of_energy}, the numerical results, averaged over $N=5\times 10^5$ of simulation runs, match the analytical expression (\ref{eq:average_energy}) accurately over a wide range of density $\rho$, i.e., from $0.01$ to $1$. As expected, the RF energy harvesting rate increases with the density of ambient RF energy sources. Under the same density, the RF energy harvesting rate is affected by $\epsilon$. We observe that when $\epsilon=0.001$, larger RF energy harvesting rate is available at the sensor than that when $\epsilon=0.01$. The straightforward reason is that, from \eqref{eq:totalamountofpower}, the smaller distance the RF sources locate near the sensor, the more aggregated RF energy harvesting rate can be achieved. Additionally, the use of the approximate expression (\ref{eq:app_average_energy}) from (\ref{eq:average_energy}) can be observed to increase with the density of ambient RF energy sources. However, the difference between (\ref{eq:app_average_energy}) and (\ref{eq:average_energy}) in percentage remains the same, i.e., 15 percent when $\epsilon=0.01$. This difference is dependent on $\epsilon$, and is shown to diminish with the decrease of $\epsilon$. As shown in Fig. \ref{fig:amount_of_energy}, when $\epsilon=0.001$, the approximate expression approximates more closely to the analytical expression.  

\subsection{Upper Bound of the Power Outage Probability}
 \label{sec:upperboundpoutage}
 
In this section, we derive upper-bounds of power outage probability defined in Section~\ref{subsec:theoretical}. We interpret these upper-bounds in terms of a worst-case scenario, as specified in Section~\ref{subsec:worstcase}. We also point out that all the numerical estimations done in this section are performed under the worst-case assumption. 

\begin{theorem}
\label{thm:estimationphi}
 Let us define 
\begin{equation*}
\gamma\triangleq \frac\lambda{4\pi}\sqrt{ \frac{\varrho\beta P_{\mathrm{S}}G_{\mathrm{S}} G_{\mathrm{H}} }{P_{\mathrm{C}} } }.
\end{equation*}
Then, the following bound holds:
\begin{eqnarray} \label{DPP_poweroutage}
P_{po} = \mathbb{P} (P_H<P_C)
\le  \left(\prod_{n\ge 0} \left(1+\alpha  \frac{\Gamma(n+1, \pi\rho\inf(R,\gamma)^2)}{n!}\right)\right)^{-1/\alpha},
\end{eqnarray}
where $\Gamma(z,a)$ is the lower incomplete Gamma function defined in \eqref{eq:defgamma}.
\end{theorem}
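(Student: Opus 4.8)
The plan is to control the outage event by a \emph{single-source} sufficiency event and then evaluate the resulting hole probability via Proposition~\ref{prop:holeproba}. First I would observe that every summand in \eqref{eq:totalamountofpower} is nonnegative, so $P_{\mathrm H}\ge P_{\mathrm H}^k$ for each $k\in\mathcal K$; hence if a single source $k$ already satisfies $P_{\mathrm H}^k\ge P_{\mathrm C}$, then $P_{\mathrm H}\ge P_{\mathrm C}$ and no power outage occurs. By \eqref{eq:harvestedRFpower}, the inequality $P_{\mathrm H}^k\ge P_{\mathrm C}$ is equivalent to $d_k=\epsilon+\lVert\mathbf x_k\rVert\le\gamma$ with $\gamma$ as in the statement, i.e.\ to $\mathbf x_k\in\mathcal B(0,\gamma-\epsilon)$. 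Absorbing the negligible offset $\epsilon$ into $\gamma$ and using that $\mathcal K\subset O=\mathcal B(0,R)$, this yields the inclusion
\[
\{P_{\mathrm H}<P_{\mathrm C}\}\ \subseteq\ \{\mathcal K\cap\mathcal B(0,\inf(R,\gamma))=\emptyset\},
\]
so that monotonicity of $\mathbb P$ gives $P_{po}\le\mathbb P(\mathcal K\cap\mathcal B(0,\inf(R,\gamma))=\emptyset)$. This inclusion is the precise meaning of the ``worst-case scenario'' of Section~\ref{subsec:worstcase}.

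Next I would apply Proposition~\ref{prop:holeproba} with $B=\mathcal B(0,t)$, $t=\inf(R,\gamma)$, which turns the right-hand side into the Fredholm determinant $\mathrm{Det}(\mathrm{Id}+\alpha K_B)^{-1/\alpha}$. The remaining task is to compute this determinant explicitly, and for this I would reuse the spectral analysis of the Ginibre kernel. Since $B$ is a disc centered at the origin, the functions $z\mapsto z^n e^{-\frac{\pi\rho}{2}|z|^2}$, $n\in\mathbb N$, are still mutually orthogonal on $B$ (the angular cross terms vanish by rotational invariance) and remain eigenfunctions of $K_B$, now with eigenvalues $\lambda_n(t)=\Gamma(n+1,\pi\rho t^2)/n!$ --- that is, \eqref{eq:eigenvalues} with $R$ replaced by $t$. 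Because $\mathrm{tr}(K_B)=\rho\pi t^2<\infty$, the operator $K_B$ is trace class and nonnegative, so $\mathrm{Det}(\mathrm{Id}+\alpha K_B)=\prod_{n\ge 0}\bigl(1+\alpha\lambda_n(t)\bigr)$; since $\lambda_n(t)\in(0,1)$ and $|\alpha|\le 1$, every factor lies in $(0,1)$, the product converges, and raising it to the power $-1/\alpha>0$ is legitimate. Substituting $t=\inf(R,\gamma)$ reproduces exactly the bound \eqref{DPP_poweroutage}.

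The main obstacle is the middle step: one must check that the eigen-decomposition of the Ginibre kernel survives restriction to the strictly smaller disc $\mathcal B(0,t)$ and then recompute the eigenvalues on that disc. This is essentially the same calculation as the one producing \eqref{eq:eigenvalues} --- only the radius of integration changes --- so it is routine rather than deep, but it should be stated or cited carefully; in particular the normalization constants of the eigenfunctions $\varphi_n$ of the Ginibre kernel depend on $t$, while the eigenvalues pick up the lower incomplete Gamma function $\Gamma(n+1,\pi\rho t^2)$. A secondary, purely cosmetic point is the $\epsilon$-offset in $d_k=\epsilon+\lVert\mathbf x_k\rVert$, which in a fully careful statement replaces $\gamma$ by $(\gamma-\epsilon)^+$; for the small $\epsilon$ used throughout the paper this does not affect the conclusion.
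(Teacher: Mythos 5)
Your proposal is correct and follows essentially the same route as the paper's proof: bounding the outage event by the hole event $\{\mathcal K\cap\mathcal B(0,\gamma-\epsilon)=\emptyset\}$ (your single-source sufficiency argument is the contrapositive of the paper's step $\mathbb P(\sum_k f(\mathbf x_k)\le P_{\mathrm C})\le\mathbb P(\forall k,\ f(\mathbf x_k)\le P_{\mathrm C})$), then invoking Proposition~\ref{prop:holeproba} and the Ginibre eigenvalues \eqref{eq:eigenvalues} to evaluate the Fredholm determinant as the stated product. The only cosmetic difference is that the paper keeps $\gamma-\epsilon$ throughout and passes to the limit $\epsilon\to 0$ by continuity at the end, rather than absorbing the offset at the outset.
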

Remark that the parameter $\epsilon$ does not appear in the bound of Theorem~\ref{thm:estimationphi} and rigorously, the inequality holds only for a sufficiently small value of $\epsilon$. In practice, we should therefore make sure that $\epsilon$ is chosen small enough.
\begin{proof}
To make the proof easier to follow, let us set
$
f(\mathbf{x}_k)\triangleq \varrho\beta P_{\mathrm{S}} \frac{G_{\mathrm{S}} G_{\mathrm{H}} \lambda^{2}}{(4\pi (\epsilon+\|\mathbf{x}_k\|))^{2}},
$
for $k\in\mathcal{K}$. Then,
\begin{align}
\label{eq:upperboundinthm2}
P_{po}	 
	=\mathbb{P} \left(\sum_{k\in\mathcal{K}} f(\mathbf{x}_k) \le P_{\mathrm{C}}\right) \le \mathbb{P} (\forall k \in\mathcal{K},\ f(\mathbf{x}_k) \le P_{\mathrm{C}})
	&= \mathbb{P} (\forall k \in\mathcal{K},\ \|\mathbf{x}_k\| \ge \gamma -\epsilon ) \\&= \mathbb{P} (\mathcal{K}\cap\mathcal{B}(0,\gamma-\epsilon) = \emptyset ),\nonumber
\end{align}
where we have chosen $\epsilon$ such that $\gamma-\epsilon\ge 0$. Thus by Proposition~\ref{prop:holeproba}, we obtain
\begin{eqnarray}\label{det}
P_{po}  \le \mathrm{Det}(\mathrm{Id}+\alpha K_{\mathcal{B}(0,\gamma-\epsilon)})^{-1/\alpha}.
\end{eqnarray}
Since in our case $K$ is the Ginibre kernel, the eigenvalues of $K$ are given by \eqref{eq:eigenvalues}. By standard properties of the Fredholm determinant which can be found, e.g., in~\cite{Brezis}, we find,
\begin{equation*}
P_{po}  
\le  \left(\prod_{n\ge 0} \left(1+\alpha  \frac{\Gamma(n+1, \pi\rho \inf(R,\gamma-\epsilon)^2)}{n!}\right)\right)^{-1/\alpha},
\end{equation*}
and the result follows by letting $\epsilon$ go to zero on the r.h.s. of \eqref{det}, since the associated function of $\epsilon$ is continuous.
\end{proof}

Let us explain briefly the heuristics behind the bound in Theorem~\ref{thm:estimationphi}. Note that the only line which is not an equality in the proof of Theorem~\ref{thm:estimationphi} is \eqref{eq:upperboundinthm2}. The approximation made therein is that the harvested energy rate $P_{\mathrm H}$ is provided by the RF source closest to the sensor node. Hence, the probability appearing on the r.h.s. of \eqref {DPP_poweroutage} is the probability corresponding to the worst-case scenario introduced in Section~\ref{subsec:worstcase}. Estimation of this upper-bound by simulation will therefore consist in assuming the worst-case scenario. Conversely, estimation of the l.h.s. of \eqref{DPP_poweroutage} will consist in assuming the general-case scenario.

It should be noted that the eigenvalues appearing in the product of Theorem~\ref{thm:estimationphi} are in decreasing order, and decrease exponentially when $n \ge  \pi\rho \inf(R,\gamma)^2$, see~\cite{DecreusefondFlintVergne} for details. Hence, the product which appears in Theorem~\ref{thm:estimationphi} is well approximated by 
\begin{eqnarray}
 \left(\prod_{n\ge 0}^N \left(1+\alpha  \frac{\Gamma(n+1,  \pi\rho\inf(R,\gamma)^2)}{n!}\right)\right)^{-1/\alpha},
\end{eqnarray}
 where $N \gg \pi\rho\inf(R,\gamma)^2$.
We also note that
\begin{eqnarray}
  \frac{\mathrm{d}}{\mathrm{d}\alpha} \ln\left(\prod_{n\ge 0} (1+\alpha \lambda_n)\right)^{-1/\alpha} 
 	=   \frac{1 }{\alpha^2}  \sum_{n\ge 0} \frac{(1+\alpha \lambda_n)\ln(1+\alpha \lambda_n) - \alpha \lambda_n}{1+\alpha \lambda_n} \ge 0,  \nonumber
\end{eqnarray}
which means that the bound of Theorem~\ref{thm:estimationphi} is smallest when $\alpha=-1$, i.e. when repulsion is maximal, and increases with $\alpha$. 

As a corollary of Theorem~\ref{thm:estimationphi}, we find in the case of a PPP (which is obtained as the limit as $\alpha\rightarrow 0$ in the theorem) the next corollary.
\begin{corollary}
Let $\mathcal{K}\sim\mathrm{Poiss}(O,\rho)$ be a Poisson process on $O=\mathcal{B}(0,R)$ with density $\rho$. Then, the following bound holds:
\begin{eqnarray} 
\label{PPP_Poweroutage}
P_{po} \le e^{-\pi\rho\inf(R,\gamma)^2},  
\end{eqnarray}
where $\gamma$ is as defined in Theorem~\ref{thm:estimationphi}.
\end{corollary}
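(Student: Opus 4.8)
The plan is to deduce the corollary directly from Theorem~\ref{thm:estimationphi} by sending $\alpha \to 0^-$, exploiting the fact recalled in Section~\ref{sec:geometricmodeling} that $\mathrm{Gin}(\alpha,\rho)$ converges weakly to the homogeneous PPP of intensity $\rho$. Writing $\lambda_n = \Gamma(n+1,\pi\rho\inf(R,\gamma)^2)/n!$ for the eigenvalues appearing in \eqref{DPP_poweroutage}, I would first note that for each fixed $n$ one has $(1+\alpha\lambda_n)^{-1/\alpha} \to e^{-\lambda_n}$, so that, once the interchange of limit and infinite product is justified, the right-hand side of \eqref{DPP_poweroutage} converges to $\exp\!\big(-\sum_{n\ge 0}\lambda_n\big)$.

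The second ingredient is the elementary identity $\sum_{n\ge 0}\lambda_n = \pi\rho\inf(R,\gamma)^2$. Setting $a = \pi\rho\inf(R,\gamma)^2$ and using $\Gamma(n+1,a) = \int_0^a e^{-t}t^n\,\mathrm dt$ from \eqref{eq:defgamma}, I would exchange the (nonnegative) sum and the integral to get $\sum_{n\ge 0}\Gamma(n+1,a)/n! = \int_0^a e^{-t}\sum_{n\ge 0}t^n/n!\,\mathrm dt = \int_0^a 1\,\mathrm dt = a$. This also shows $\sum_n \lambda_n < \infty$, which is exactly what legitimizes the limit interchange above: applying dominated convergence to $-\tfrac1\alpha\sum_{n\ge 0}\ln(1+\alpha\lambda_n)$, whose $n$-th term is bounded by $\lambda_n$ uniformly in $\alpha\in[-1,0)$ and tends to $\lambda_n$ pointwise. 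Combining the two steps gives $P_{po} \le e^{-a} = e^{-\pi\rho\inf(R,\gamma)^2}$.

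A cleaner alternative, which I would likely present instead, bypasses the limit-interchange bookkeeping entirely: re-run the proof of Theorem~\ref{thm:estimationphi} verbatim through \eqref{eq:upperboundinthm2}, yielding $P_{po} \le \mathbb{P}(\mathcal{K}\cap\mathcal{B}(0,\gamma-\epsilon)=\emptyset)$, and then invoke directly the void probability of a homogeneous Poisson process: for any $r\ge 0$, $\mathbb{P}(\mathcal{K}\cap\mathcal{B}(0,r)=\emptyset) = \exp(-\rho\,|\mathcal{B}(0,r)\cap O|) = \exp(-\pi\rho\inf(R,r)^2)$ since $O=\mathcal{B}(0,R)$. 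Taking $r=\gamma-\epsilon$ and letting $\epsilon\to 0$ by continuity gives the result. The only delicate point in the first route is the passage of $\alpha\to 0$ through the infinite product, and it is precisely resolved by the finiteness $\sum_n\lambda_n = a$; the direct Poisson argument sidesteps it altogether, so I expect no substantive obstacle.
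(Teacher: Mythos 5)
Your proposal is correct, and your preferred (second) route is genuinely different from what the paper does. The paper gives no written proof of this corollary: it simply asserts that the bound is ``obtained as the limit as $\alpha\rightarrow 0$'' of Theorem~\ref{thm:estimationphi}, which is exactly your first route. Your contribution there is to supply the missing justification, namely the identity $\sum_{n\ge 0}\Gamma(n+1,a)/n! = \int_0^a e^{-t}e^{t}\,\mathrm{d}t = a$ with $a=\pi\rho\inf(R,\gamma)^2$, which both evaluates the limit $\prod_n(1+\alpha\lambda_n)^{-1/\alpha}\to e^{-\sum_n\lambda_n}=e^{-a}$ and legitimizes the interchange via dominated convergence. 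One logical wrinkle you gesture at but should make explicit if you keep route one: passing to the limit on the right-hand side of \eqref{DPP_poweroutage} does not by itself yield a statement about a genuine Poisson process on the left-hand side; you additionally need that the hole probability of $\mathrm{Gin}(\alpha,\rho)$ converges to that of the PPP (which follows from the weak convergence recalled in Section~\ref{sec:geometricmodeling}, or from $\mathrm{Det}(\mathrm{Id}+\alpha K_B)^{-1/\alpha}\to e^{-\mathrm{tr}\,K_B}=e^{-\rho|B|}$). Your second route avoids this entirely and is the cleaner argument: repeat the worst-case inclusion of \eqref{eq:upperboundinthm2} to get $P_{po}\le\mathbb{P}(\mathcal{K}\cap\mathcal{B}(0,\gamma-\epsilon)=\emptyset)$, invoke the elementary Poisson void probability $e^{-\rho|\mathcal{B}(0,\gamma-\epsilon)\cap O|}=e^{-\pi\rho\inf(R,\gamma-\epsilon)^2}$, and let $\epsilon\to 0$ by continuity. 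The limit route buys consistency with the DPP formula and exhibits the PPP as the $\alpha\to 0$ endpoint of the monotonicity-in-$\alpha$ discussion; the direct route buys a short, self-contained, fully rigorous proof. Either is acceptable; the second is preferable as a standalone proof.
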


 \begin{figure}
 \centering
  \begin{minipage}[c]{0.48\textwidth}
          \includegraphics[width=0.95\textwidth]{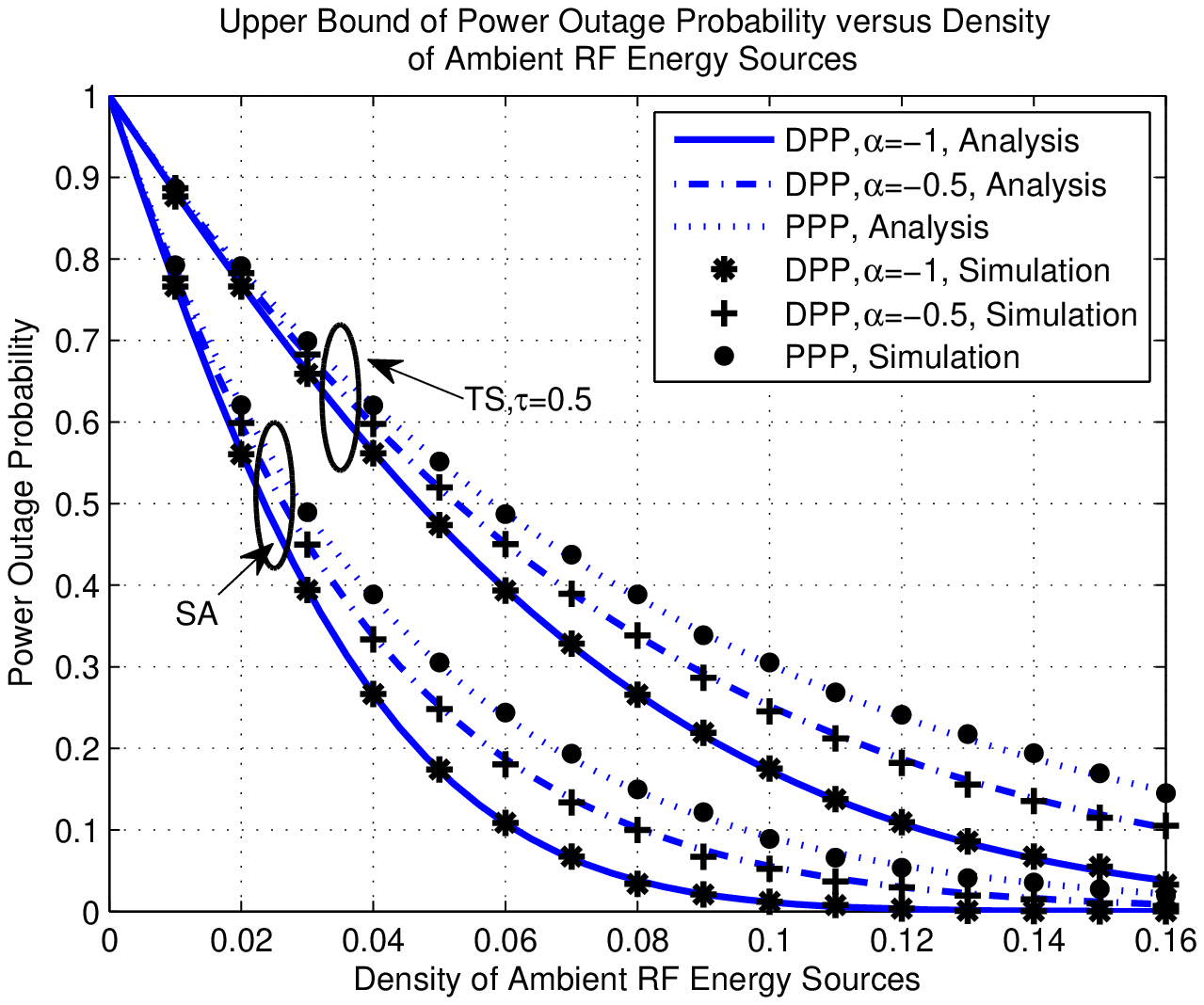}
           \caption{Upper bound of power outage probability versus density of ambient RF energy source (separated receiver architecture).}  \label{fig:power_outage}          
    \end{minipage}
  \begin{minipage}[c]{0.48\textwidth}
          \includegraphics[width=0.95\textwidth]{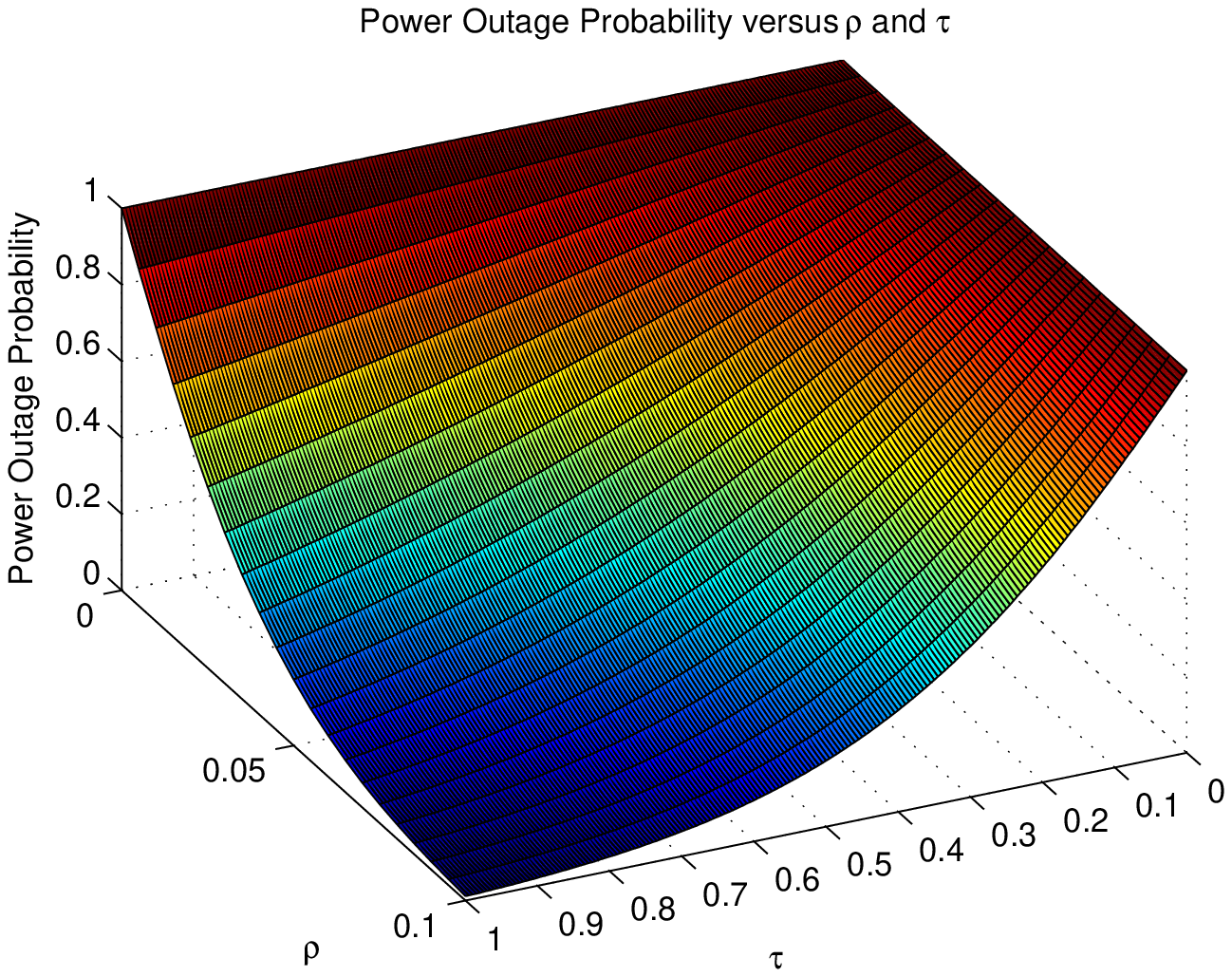}
            \caption{Upper Bound of Power Outage Probability versus Density of Ambient RF  Energy Sources and Time-Switching Coefficient $\tau$ (time-switching architecture).}\label{fig:power_outage_TS}        
  \end{minipage}
  \end{figure}



In Fig. \ref{fig:power_outage}, we show the variation of the upper bound of the power outage probability $P_{po}$ as a function the density of ambient RF energy sources $\rho$ for both separated and time-switching architecture. It is observed that $P_{po}$ is a decreasing function of $\rho$. 
The numerical results verify that the analytical expressions for the upper bounds in \eqref{DPP_poweroutage} and \eqref{PPP_Poweroutage} are very tight for different values of $\alpha$. We also observe that the smaller the $\alpha$, the lower the upper bound of the outage probability. In other words, more repulsion among the locations of the RF sources results in lower outage probability of the sensor. This finding coincides with the previous analysis that the maximal repulsion results in best power outage performance. 
Moreover, we can observe that the influence of $\alpha$ depends on the density $\rho$. For both receiver architectures, the gap between the bounds for DPP ($\alpha=-1$) and PPP first increases with the density $\rho$ and then decreases when the power outage probability becomes low.

Figure~\ref{fig:power_outage_TS} examines the impact of time-switching coefficient $\tau$ on the upper bound of the outage probability. It is obvious that, regardless of density $\rho$, power outage probability is a monotonically decreasing function of $\tau$. 
As is seen on Figure~\ref{fig:power_outage_TS}, a small value of $\tau$ can significantly degrade the upper bound of power outage performance. The more the value of $\tau$ decreases, the more quickly the power outage performance degrades.

Then, we evaluate the impact of circuit power consumption $P_C$ of the sensor for separate and time-switching receiver architectures in Fig. \ref{fig:power_outage_SA} and Fig. \ref{fig:power_outage_PC_tau}, respectively. It is seen that, when the density $\rho$ is small (e.g., $\rho=0.01$), the corresponding plot is a logarithm-like function. Specifically, the power outage probability is very sensitive to the small value of $P_C$, and becomes less sensitive when $P_C$ is larger (e.g., above $2\times 10^{-5}$W). For example, the upper bound of power outage probability for PPP increases from $15.1\%$ to $58.4\%$ (i.e., $43.3\%$ difference) when $P_C$ varies from $0.2\times 10^{-5}$W to $0.7\times 10^{-5}$W. However, when $P_C$ varies in the same amount from $2.0\times 10^{-5}$W to $2.5\times 10^{-5}$W, the upper bound only changes from $82.9\%$ to $86.1\%$ (i.e., $3.2\%$ difference). This indicates that, in practice, advanced sensor circuit with small $P_C$ can help to lower down power outage probability significantly, especially when the available RF energy harvesting rate is small. For the scenarios with larger density $\rho$ (e.g., $\rho=0.03$), the outage probability tends to grow smoothly with the increase of $P_C$. The gap between the upper bounds of the DPP ($\alpha=-1$) and the PPP increases with the density $\rho$. This can be understood since a larger number of random RF sources results in a larger variance in RF energy harvesting rate (shown in \eqref{eq:variance_harvested} in {\bf Theorem 1}), thus a greater difference in the power outage probability. For time-switching, as shown in Fig. \ref{fig:power_outage_PC_tau}, the impact of $P_C$ on the power outage probability is dependent on the coefficient $\tau$. The larger $\tau$ is, the more dynamically the upper bound of power outage probability varies when $P_C$ is small. 

Lastly, we emphasize that in practice, the computation of the upper-bound obtained in Theorem~\ref{thm:estimationphi} is very fast. Therefore, we believe that these bounds are the preferred choice in practical applications.

 \begin{figure}
 \centering
  \begin{minipage}[c]{0.48\textwidth}
          \includegraphics[width=0.95\textwidth]{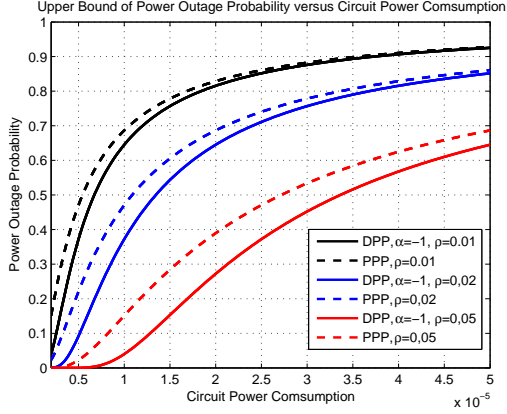}
           \caption{Upper bound of power outage probability versus density of ambient RF energy source (separated receiver architecture).}  \label{fig:power_outage_SA}          
    \end{minipage}
  \begin{minipage}[c]{0.48\textwidth}
          \includegraphics[width=0.95\textwidth]{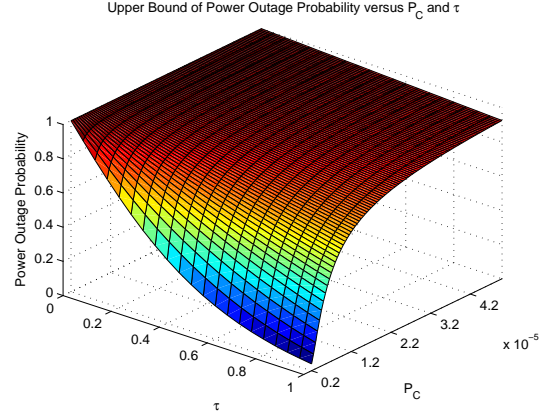}
            \caption{Upper Bound of Power Outage Probability versus Circuit Power Consumption and Time-Switching Coefficient $\tau$ (time-switching architecture).}   \label{fig:power_outage_PC_tau}     
  \end{minipage}
  \end{figure}



\subsection{Upper Bound of the Transmission Outage Probability}

Next, we give a practical upper bound for the estimation of the transmission outage probability $P_{to}$. Again, we interpret these upper-bounds in terms of a worst-case scenario, which was specified in Section~\ref{subsec:worstcase}. 
\begin{theorem}
\label{thm:transmissionoutage}
Let us define:
\begin{equation}
\label{eq:gammam}
	\gamma_m\triangleq \frac\lambda{4\pi}\sqrt{ \frac{P_{\mathrm S}\varrho\beta G_{\mathrm{S}} G_{\mathrm{H}}\left(h_0+\eta\xi\left(1-2^{m/(\eta W)}\right)\right)}{P_{\mathrm C}h_0+\eta\sigma^2\left(2^{m/(\eta W)}-1\right)}}.
\end{equation}
Then we obtain
\begin{equation}
\label{eq:boundpsi}
	P_{to} = \mathbb{P} \left(	C	<	m	\right)
	\le  \left(\prod_{n\ge 0} \left(1+\alpha  \frac{\Gamma(n+1,  \pi\rho\inf(R,\gamma_m)^2)}{n!}\right)\right)^{-1/\alpha}.
\end{equation}
\end{theorem}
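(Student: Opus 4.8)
The plan is to imitate the proof of Theorem~\ref{thm:estimationphi}: I would first rewrite the transmission-outage event $\{C<m\}$ as a level event $\{P_{\mathrm H}<P^{\ast}\}$ for an explicit deterministic power threshold $P^{\ast}$, then dominate this event by a hole event of the $\alpha$-DPP and invoke Proposition~\ref{prop:holeproba}.

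Since $\sum_{k\in\mathcal{K}}P_{\mathrm H}^{k}=P_{\mathrm H}$, the rate in \eqref{eq:maxtransmission} depends on the point configuration only through the scalar $P_{\mathrm H}$, namely $C=\eta W\log_2\!\left(1+\frac{h_0}{\eta}\,\frac{[P_{\mathrm H}-P_{\mathrm C}]^{+}}{\xi P_{\mathrm H}+\sigma^{2}}\right)$. Writing $A\triangleq 2^{m/(\eta W)}-1\ge 0$, the map $P_{\mathrm H}\mapsto [P_{\mathrm H}-P_{\mathrm C}]^{+}/(\xi P_{\mathrm H}+\sigma^{2})$ is nondecreasing (for $P_{\mathrm H}>P_{\mathrm C}$ its derivative equals $(\sigma^{2}+\xi P_{\mathrm C})/(\xi P_{\mathrm H}+\sigma^{2})^{2}>0$), hence $C$ is a nondecreasing function of $P_{\mathrm H}$; the point to stress here is that adding RF sources raises the rate even though the interference term $\xi\sum_k P_{\mathrm H}^{k}$ grows with them. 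Therefore $\{C<m\}=\{P_{\mathrm H}<P^{\ast}\}$, where $P^{\ast}$ solves $C=m$, i.e.\ $h_0(P_{\mathrm H}-P_{\mathrm C})=A\eta(\xi P_{\mathrm H}+\sigma^{2})$, which gives
\begin{equation*}
P^{\ast}=\frac{h_0 P_{\mathrm C}+\eta\sigma^{2}A}{h_0-\eta\xi A}.
\end{equation*}
I would note that the bound is meaningful only when $h_0-\eta\xi A>0$, equivalently when $\gamma_m$ in \eqref{eq:gammam} is real; otherwise $C<m$ for every configuration, $P_{to}=1$, and the statement holds trivially.

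From this point the argument is exactly the one used for Theorem~\ref{thm:estimationphi} with $P_{\mathrm C}$ replaced by $P^{\ast}$. With $f(\mathbf{x}_k)=\varrho\beta P_{\mathrm S}G_{\mathrm S}G_{\mathrm H}\lambda^{2}/(4\pi(\epsilon+\|\mathbf{x}_k\|))^{2}$ and $P_{\mathrm H}=\sum_{k\in\mathcal{K}}f(\mathbf{x}_k)$ with $f\ge 0$,
\begin{align*}
P_{to}=\mathbb{P}(P_{\mathrm H}<P^{\ast})
&\le\mathbb{P}\big(\forall k\in\mathcal{K},\ f(\mathbf{x}_k)<P^{\ast}\big)\\
&=\mathbb{P}\big(\forall k\in\mathcal{K},\ \|\mathbf{x}_k\|>\gamma_m-\epsilon\big)
=\mathbb{P}\big(\mathcal{K}\cap\mathcal{B}(0,\gamma_m-\epsilon)=\emptyset\big),
\end{align*}
since $f(\mathbf{x}_k)<P^{\ast}\iff \epsilon+\|\mathbf{x}_k\|>\frac{\lambda}{4\pi}\sqrt{\varrho\beta P_{\mathrm S}G_{\mathrm S}G_{\mathrm H}/P^{\ast}}$, and substituting the formula for $P^{\ast}$ shows that this last quantity equals exactly $\gamma_m$ of \eqref{eq:gammam} (setting $m=0$ recovers $\gamma_m=\gamma$, a convenient sanity check). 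Proposition~\ref{prop:holeproba} then yields $P_{to}\le\mathrm{Det}(\mathrm{Id}+\alpha K_{\mathcal{B}(0,\gamma_m-\epsilon)})^{-1/\alpha}$; because $\mathcal{K}$ is supported on $O=\mathcal{B}(0,R)$ the relevant radius is $\inf(R,\gamma_m-\epsilon)$, and inserting the Ginibre eigenvalues \eqref{eq:eigenvalues} and letting $\epsilon\to 0$ by continuity gives \eqref{eq:boundpsi}.

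The only genuinely new ingredient relative to Theorem~\ref{thm:estimationphi} is the reduction $\{C<m\}=\{P_{\mathrm H}<P^{\ast}\}$, so the main obstacle (such as it is) is verifying the monotonicity of $C$ in $P_{\mathrm H}$ and carrying out the algebra that matches the critical power $P^{\ast}$ with the distance $\gamma_m$ in \eqref{eq:gammam}; everything downstream --- domination by a hole event, the Fredholm determinant, the Ginibre spectrum, and the $\epsilon\to 0$ passage --- is routine and identical to the proof of the power-outage bound.
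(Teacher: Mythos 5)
Your proposal is correct and follows essentially the same route as the paper: both reduce $\{C<m\}$ to the level event $\{P_{\mathrm H}<P^{\ast}\}$ with $P^{\ast}=(h_0P_{\mathrm C}+\eta\sigma^2(2^{m/(\eta W)}-1))/(h_0-\eta\xi(2^{m/(\eta W)}-1))$ (the paper via direct algebraic manipulation of the inequality, you via monotonicity of $C$ in $P_{\mathrm H}$, which are equivalent), treat the degenerate case $h_0-\eta\xi(2^{m/(\eta W)}-1)\le 0$ identically, and then apply the power-outage argument of Theorem~\ref{thm:estimationphi} with $P_{\mathrm C}$ replaced by $P^{\ast}$ to obtain $\gamma_m$. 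The only cosmetic difference is that you inline the hole-probability/Fredholm-determinant steps where the paper simply cites Theorem~\ref{thm:estimationphi}.
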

\begin{proof}
Applying the definition of $C$ given in \eqref{eq:maxtransmission}, it holds that
\begin{align*}
	P_{to} &= \mathbb{P} \left( \frac{h_0}\eta\left[	P_{\mathrm{H}} - P_{\mathrm{C}} \right]^+<\left(\sigma^2+\xi \sum_{k\in\mathcal K}P_{\mathrm H}^k\right)\left(2^{m/(\eta W)}-1\right) \right)\nonumber\\
	&= \mathbb{P} \left(P_{\mathrm{H}}<P_{\mathrm C}+\eta\left(\sigma^2+\xi P_{\mathrm H}\right)\frac{2^{m/(\eta W)}-1}{h_0} \right)\nonumber\\
	&=\mathbb{P} \left(P_{\mathrm{H}}\left(h_0-\eta\xi\left(2^{m/(\eta W)}-1\right)\right)<h_0P_{\mathrm C}+\eta\sigma^2\left(2^{m/(\eta W)}-1\right)\right)\nonumber\\
	\end{align*}
where we have used the fact that $2^{m/(\eta W)}-1\ge 0$. This implies that if $h_0-\xi\left(2^{m/(\eta W)}-1\right)\le 0$, $P_{to} =1$. Now, if $h_0-\xi\left(2^{m/(\eta W)}-1\right)>0$, it remains to reorganize the equation in order to use Theorem~\ref{thm:estimationphi}:
\begin{equation}
\label{eq:intermeqthm3}
	P_{to}=\mathbb{P} \left(P_{\mathrm{H}}<\frac{h_0P_{\mathrm C}+\eta\sigma^2\left(2^{m/(\eta W)}-1\right)}{h_0-\eta\xi\left(2^{m/(\eta W)}-1\right)}\right),
\end{equation}
so we conclude by Theorem~\ref{thm:estimationphi} that 
$
	P_{to}  \le\left(\prod_{n\ge 0} \left(1+\alpha  \frac{\Gamma(n+1,  \pi\rho\inf(R,\gamma_m)^2)}{n!}\right)\right)^{-1/\alpha},
$
where 
\begin{align*}
	\gamma_m=\frac\lambda{4\pi}\sqrt{ \frac{\varrho\beta P_{\mathrm{S}}G_{\mathrm{S}} G_{\mathrm{H}} }{\left(\frac{h_0P_{\mathrm C}+\eta\sigma^2\left(2^{m/(\eta W)}-1\right)}{h_0-\eta\xi\left(2^{m/(\eta W)}-1\right)}\right) } }  
		=\frac\lambda{4\pi}\sqrt{ \frac{P_{\mathrm S}\varrho\beta G_{\mathrm{S}} G_{\mathrm{H}}\left(h_0+\eta\xi\left(1-2^{m/(\eta W)}\right)\right)}{P_{\mathrm C}h_0+\eta\sigma^2\left(2^{m/(\eta W)}-1\right)}}.
\end{align*}
\end{proof}  

We note that equation \eqref{eq:intermeqthm3} is an equality, so the only inequality in the proof of Theorem~\ref{thm:transmissionoutage} traces back to the application of Theorem~\ref{thm:estimationphi}. So similar to the observations made after the proof of Theorem~\ref{thm:estimationphi}, we note that the probability appearing on the r.h.s. of \eqref {eq:boundpsi} corresponds to the worst-case scenario introduced in Section~\ref{subsec:worstcase}. Estimation of the upper-bound by simulation will therefore consist in assuming the worst-case scenario.
  
\begin{figure} 
\centering
\subfigure [Out-of-band Transmission ($d=50$, $m=3$)] {
\label{Transoutage_out}
\centering
\includegraphics[width=0.45 \textwidth]{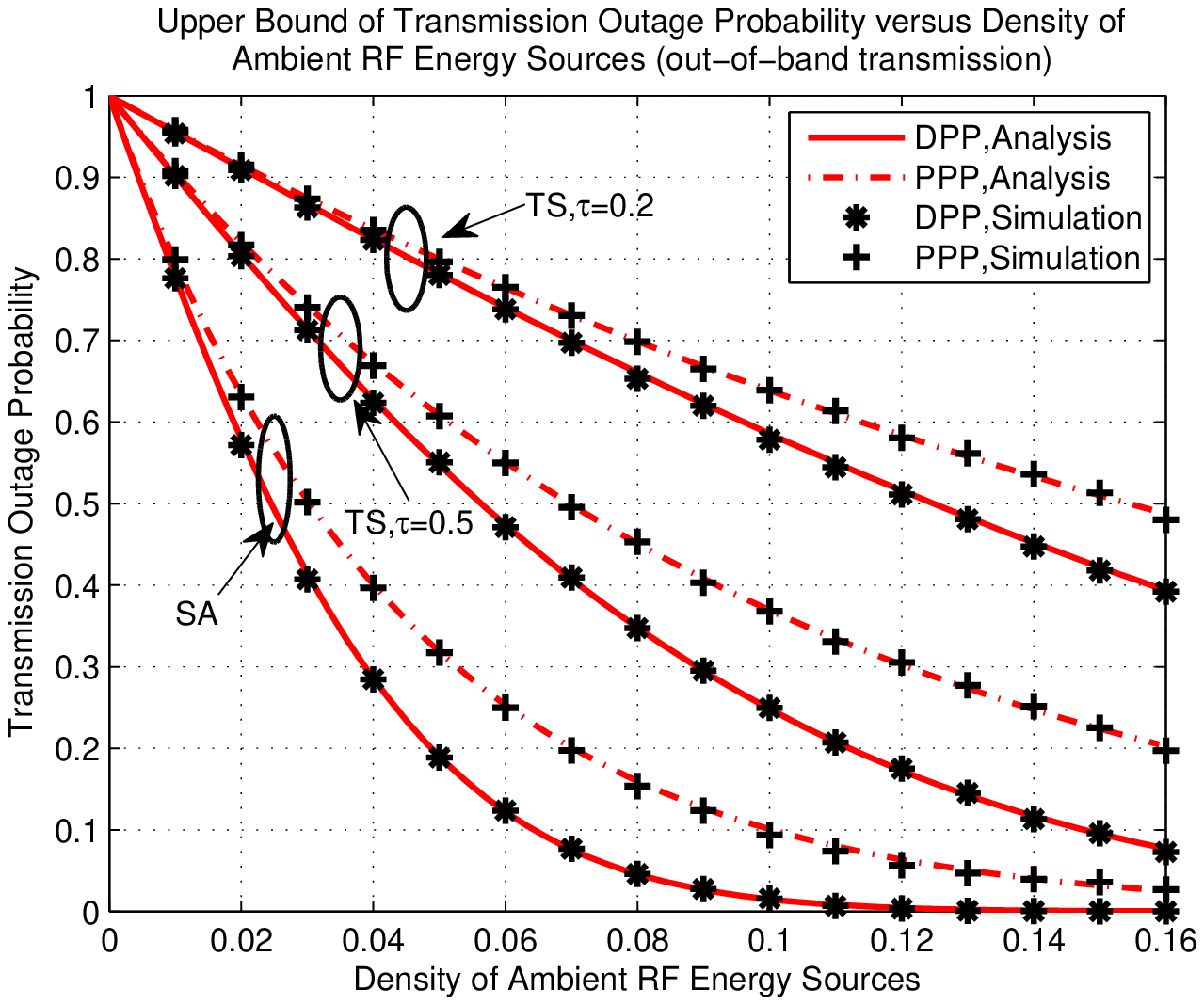}} 
\centering
\subfigure [In-band Transmission ($d=5$, $m=0.02$)] {
\label{Transoutage_in}
\centering
\includegraphics[width=0.45 \textwidth]{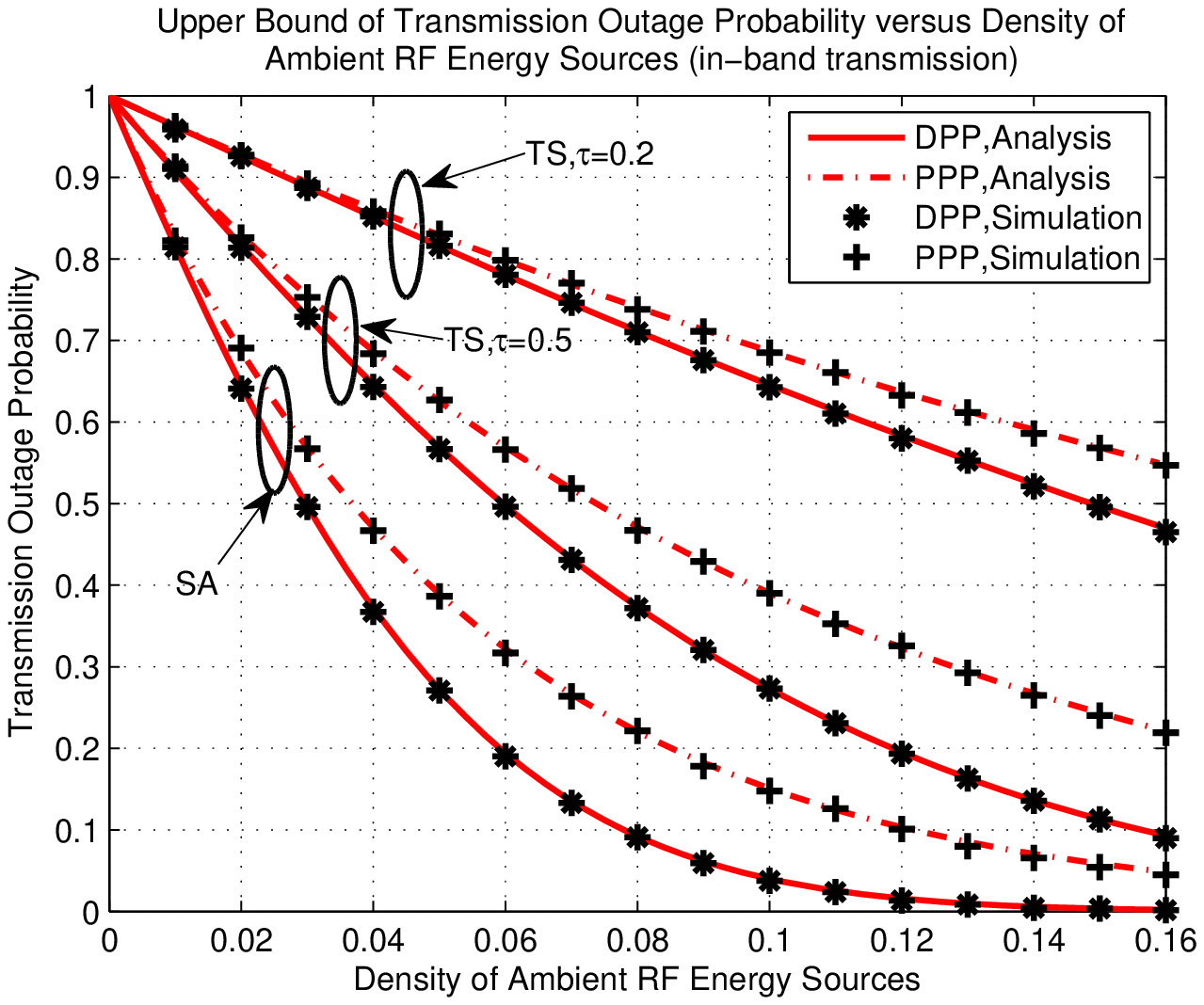}} \\
\centering
\caption{Upper bound of transmission outage probability versus density of ambient RF energy sources.}
\label{fig:trans_outage}
\end{figure}

\begin{figure}
\centering
\includegraphics[width=0.45\textwidth]{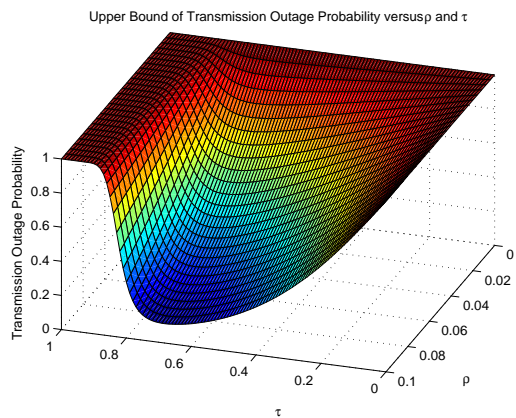}
\caption{Upper Bound of Transmission Outage Probability versus Density of Ambient RF Energy Sources and Time-Switching Coefficient $\tau$ (out-of-band transmission).} \label{fig:Transoutage_TS_out}
\end{figure}

We may also estimate the transmission outage probability when $\mathcal{K}$ is a Poisson process as follows.
\begin{corollary}
Let $\mathcal{K}\sim\mathrm{Poiss}(O,\rho)$ be a Poisson process on $O=\mathcal{B}(0,R)$ with density $\rho$. Then, the following bound holds:
\begin{eqnarray}
	P_{to} \le e^{-\pi\rho\inf(R,\gamma_m)^2},
\end{eqnarray}
where $\gamma_m$ is defined in \eqref{eq:gammam}. 
\end{corollary}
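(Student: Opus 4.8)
The plan is to mirror the proof of Theorem~\ref{thm:transmissionoutage}, simply replacing the hole-probability formula for the $\alpha$-DPP (Proposition~\ref{prop:holeproba}) by the classical void probability of a Poisson process. First I would repeat verbatim the algebraic reduction at the start of the proof of Theorem~\ref{thm:transmissionoutage}: using the expression \eqref{eq:maxtransmission} for $C$ and the fact that $2^{m/(\eta W)}-1\ge 0$, the event $\{C<m\}$ becomes $\{P_{\mathrm H}(h_0-\eta\xi(2^{m/(\eta W)}-1))<h_0P_{\mathrm C}+\eta\sigma^2(2^{m/(\eta W)}-1)\}$. When $h_0-\eta\xi(2^{m/(\eta W)}-1)\le 0$ this forces $P_{to}=1$ (and $\gamma_m$ is not well defined), and otherwise we obtain the equality \eqref{eq:intermeqthm3}, i.e. $P_{to}=\mathbb P(P_{\mathrm H}<c)$ with $c$ the right-hand constant of \eqref{eq:intermeqthm3}, whose square-rooted reciprocal (up to the front factors) is precisely $\gamma_m$ of \eqref{eq:gammam}.

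Next, exactly as in \eqref{eq:upperboundinthm2}, I would bound $P_{to}=\mathbb P\!\left(\sum_{k\in\mathcal K}f(\mathbf x_k)<c\right)\le\mathbb P(\forall k\in\mathcal K,\ f(\mathbf x_k)<c)=\mathbb P(\mathcal K\cap\mathcal B(0,\gamma_m-\epsilon)=\emptyset)$, valid for $\epsilon$ small enough that $\gamma_m-\epsilon\ge 0$. For $\mathcal K\sim\mathrm{Poiss}(O,\rho)$ the void probability of a Borel set $B\subset O$ is $e^{-\rho|B|}$; applying this to $B=\mathcal B(0,\gamma_m-\epsilon)\cap O$, of Lebesgue measure $\pi\inf(R,\gamma_m-\epsilon)^2$, gives $P_{to}\le e^{-\pi\rho\inf(R,\gamma_m-\epsilon)^2}$, and letting $\epsilon\to 0$ by continuity yields the stated bound.

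Alternatively, one can deduce the corollary straight from Theorem~\ref{thm:transmissionoutage} by letting $\alpha\to 0^-$ in \eqref{eq:boundpsi}: since $\alpha^{-1}\ln(1+\alpha\lambda)\to\lambda$ and the eigenvalues $\lambda_n=\Gamma(n+1,\pi\rho\inf(R,\gamma_m)^2)/n!$ are summable with $\sum_{n\ge 0}\lambda_n=\int_0^{\pi\rho\inf(R,\gamma_m)^2}e^{-t}\sum_{n\ge 0}t^n/n!\,\mathrm dt=\pi\rho\inf(R,\gamma_m)^2$, dominated convergence gives $\left(\prod_{n\ge 0}(1+\alpha\lambda_n)\right)^{-1/\alpha}\to e^{-\pi\rho\inf(R,\gamma_m)^2}$, in agreement with the weak convergence of the Ginibre $\alpha$-DPP to $\mathrm{Poiss}(O,\rho)$ as $\alpha\to 0$.

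There is no genuine obstacle here: the statement is a routine specialization. The only mild care is (i) handling the degenerate regime $h_0-\eta\xi(2^{m/(\eta W)}-1)\le 0$, where $\gamma_m$ is not defined and $P_{to}=1$, which should be excluded or interpreted separately, and (ii) in the limiting derivation, justifying the interchange of the limit with the infinite product, which is immediate from the summability of $(\lambda_n)_{n\ge 0}$ noted above.
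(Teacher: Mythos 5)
Your proposal is correct, and it contains the paper's own route: the paper states this corollary without a written proof, presenting it as the $\alpha\to 0$ limit of Theorem~\ref{thm:transmissionoutage}, which is exactly your alternative derivation (your computation $\sum_{n\ge 0}\lambda_n=\pi\rho\inf(R,\gamma_m)^2$ and the resulting limit $e^{-\pi\rho\inf(R,\gamma_m)^2}$ is the justification the paper leaves implicit). Your primary route, rerunning the reduction of Theorem~\ref{thm:transmissionoutage} and substituting the Poisson void probability $e^{-\rho|B|}$ for the Fredholm-determinant hole probability, is an equally valid and slightly more self-contained argument.
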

 
In Fig. \ref{fig:trans_outage}, we evaluate the analytical expressions of the upper bound of transmission outage probability for both scenarios of out-of-band and in-band transmission. For the former and latter scenarios, we set the distance between the sensor and data sink $d$ to be 50m and 5m, and the transmission rate requirement to be 3kbps and 0.02kbps, respectively. The numerical results are averaged over $10^{6}$ runs of simulation. Similar to the upper bound of power outage probability, the upper bound of transmission outage probability demonstrates a similar pattern, i.e., a decreasing function of density $\rho$. It is shown that the analytical upper bounds for both receiver architectures in both scenarios of out-of-band and in-band transmission are very accurate.
 
For the time-switching architecture, we examine the impact of the coefficient $\tau$ in Fig. \ref{fig:Transoutage_TS_out}. We can observe that the larger the density $\rho$ is, the more the value of $\tau$ affects the transmission outage performance. Unlike power outage probability which is a decreasing function of $\tau$, transmission outage probability has a minimal value attained for $\tau\in(0,1)$. For a certain density $\rho>0$, when $\tau$ varies from $0$ to $1$, the upper bound of transmission outage probability first decreases, but begins to increase quickly after $\tau$ exceeds a certain value. Then, after $\tau$ reaches another certain threshold, the upper bound remains to be $1$. This implies that there exists some tradeoff between the energy harvesting time and data transmission time to minimize transmission outage probability. Another interesting finding in Fig. \ref{fig:Transoutage_TS_out} is that the optimal value of $\tau$ to minimize the transmission outage probability are not dependent on the density $\rho$. This will be proved in Section~\ref{subsec:optimaltau}.

\begin{figure} 
\centering
\subfigure [Out-of-band Transmission ($d=50$)] {
\label{Transoutage_SA_TS_rate_out}
\centering
\includegraphics[width=0.45 \textwidth]{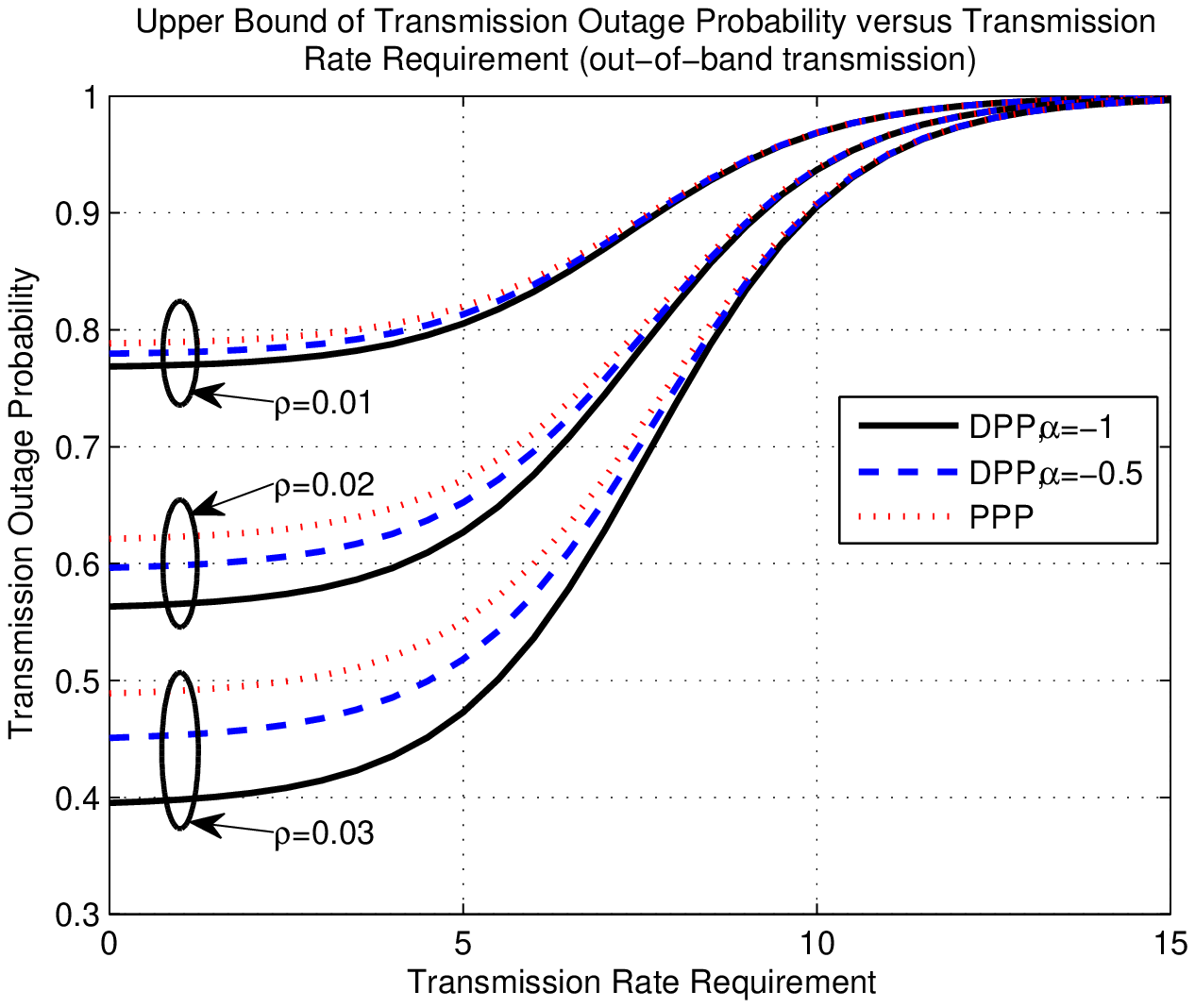}} 
\centering
\subfigure [In-band Transmission ($d=5$)] {
\label{Transoutage_SA_TS_rate_in}
\centering
\includegraphics[width=0.45 \textwidth]{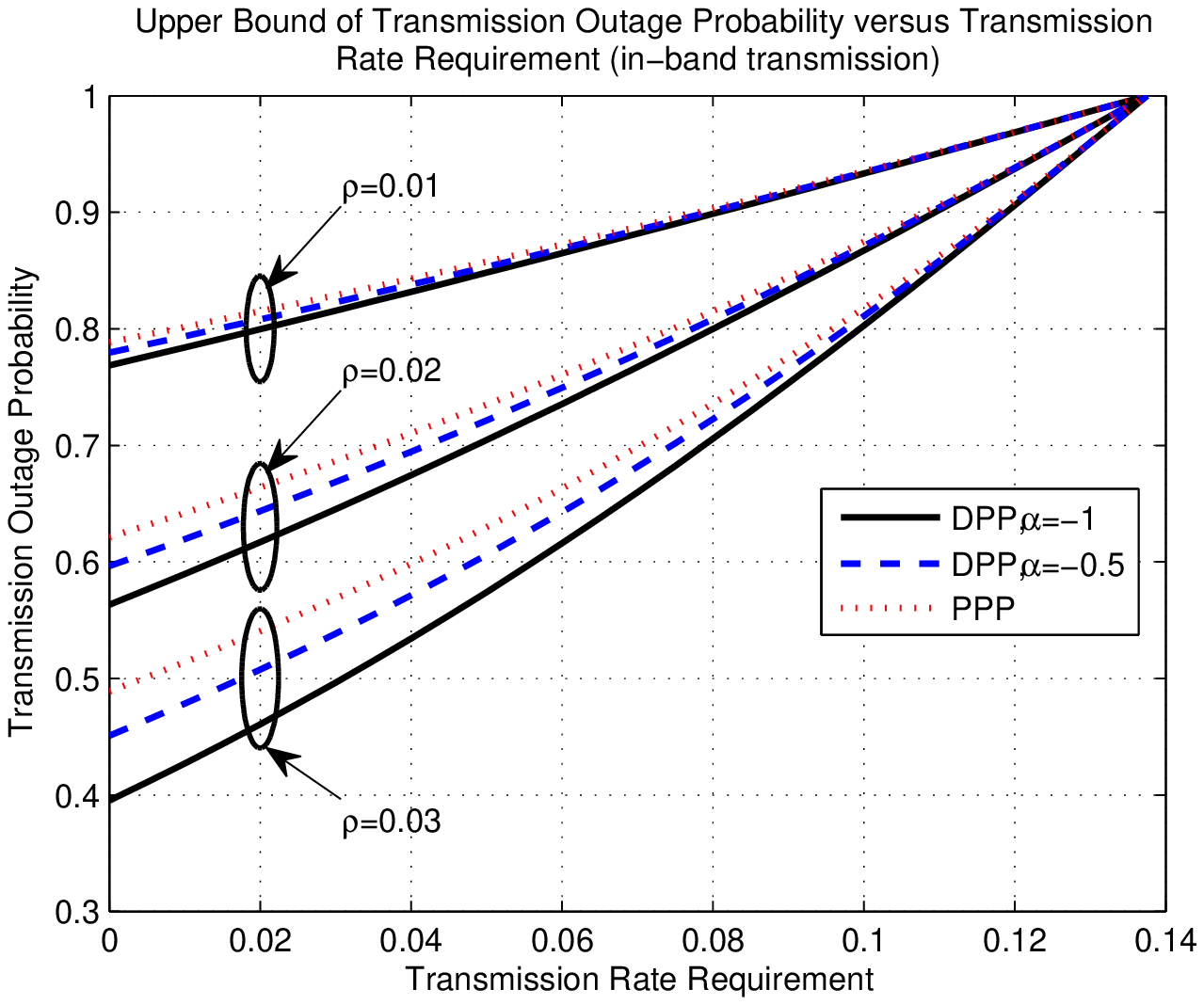}} \\
\centering
\caption{Upper bound of transmission outage probability versus transmission rate requirement (separated receiver architecture).}
\label{fig:transoutage_rate}
\end{figure}

\begin{figure} 
\centering
\subfigure [Out-of-band Transmission ($d=50$, $\rho=0.03$ )] {
\label{transoutage_rate_TS_out}
\centering
\includegraphics[width=0.45 \textwidth]{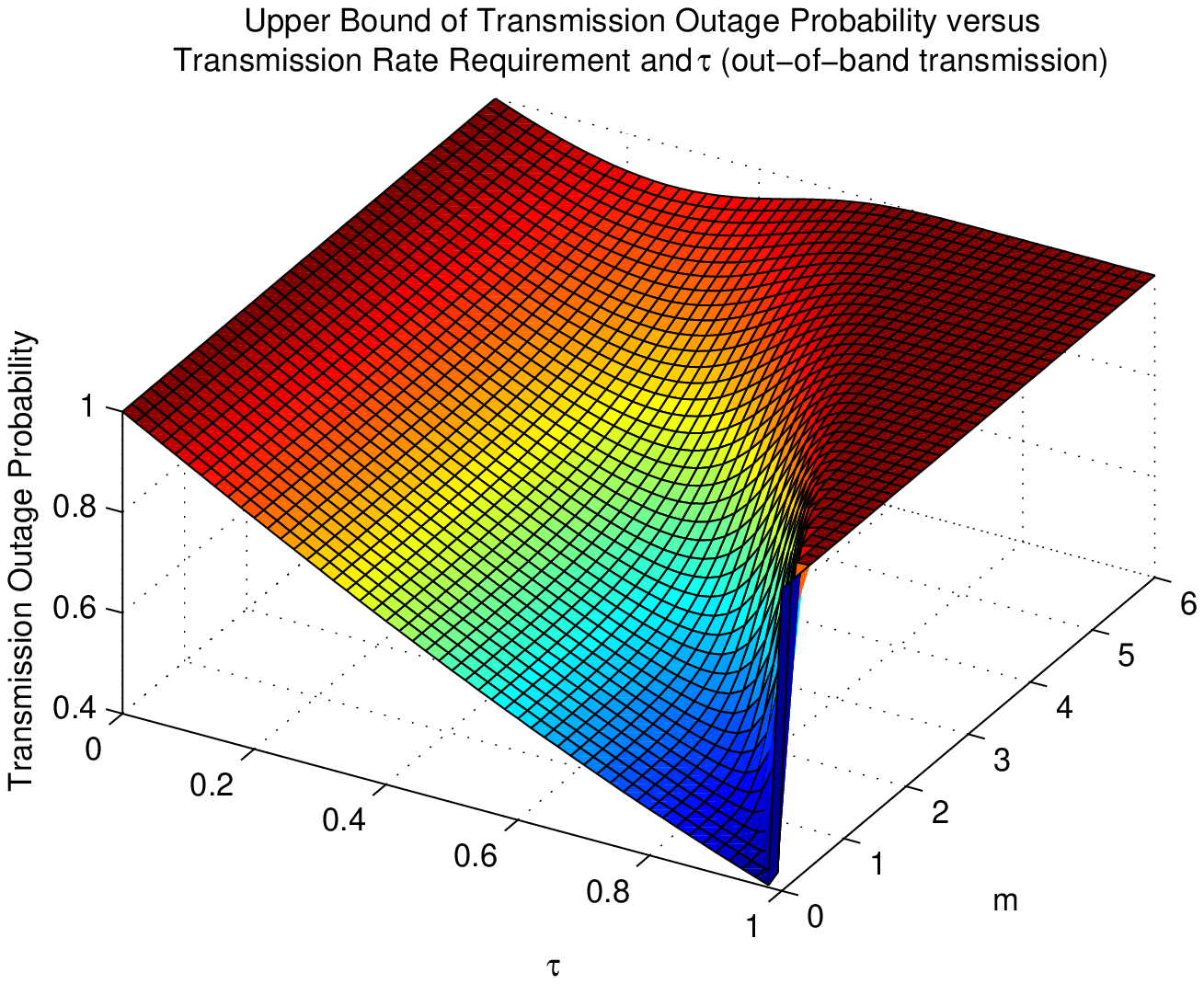}} 
\centering
\subfigure [In-band Transmission ($d=5$, $\rho=0.03$ )] {
\label{transoutage_rate_TS_in}
\centering
\includegraphics[width=0.45 \textwidth]{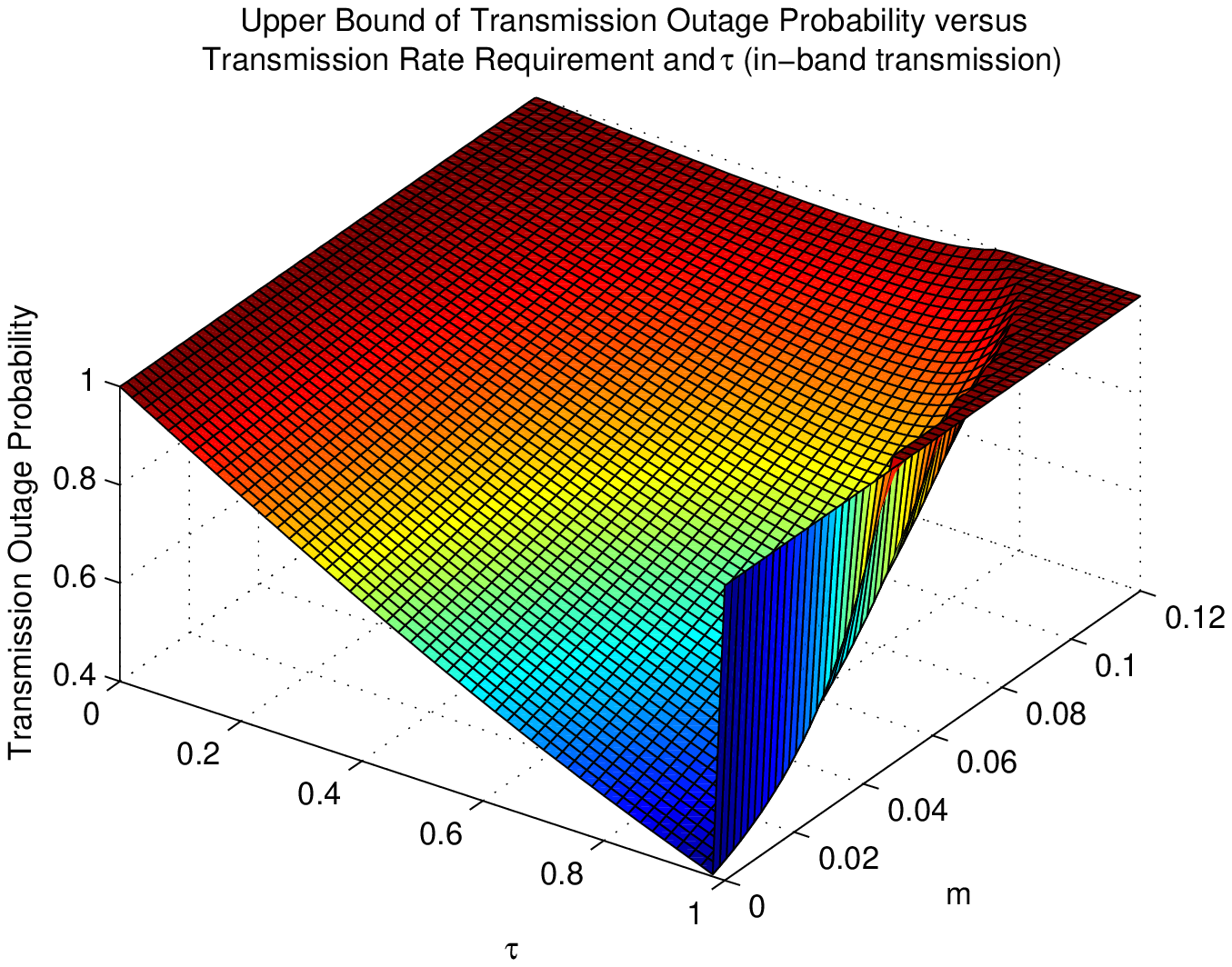}} \\
\centering
\caption{Upper bound of transmission outage probability versus transmission rate requirement and $\tau$ (time-switching architecture).}
\label{fig:trans_outage_alpha}
\end{figure}

Next, in Fig. \ref{fig:transoutage_rate} and Fig. \ref{fig:trans_outage_alpha}, we illustrate the impact of transmission rate requirement on the upper bounds of transmission outage probability for separated and time-switching receiver architecture, respectively. It is clear from the figures that the transmission outage probability is an increasing function of transmission rate requirement for both architectures.
As can be seen from Fig. \ref{fig:transoutage_rate}, 
a larger density $\rho$ causes a bigger difference between the upper bounds of transmission outage probability for the DPP ($\alpha=-1$) and the PPP. This is because when the density $\rho$ is small, there are few RF sources, and their correlation structure does not have such a high impact.
Similar to the study in power outage probability, the difference when $\alpha$ varies is caused by the increased variance of RF energy harvesting rate. Indeed, since the variance is higher, when $\alpha$ is close to $-1$, the transmission outage probability is also higher.
For out-of-band transmission, the upper bound of the transmission outage probability is a sigmoid shape function of transmission rate requirement. Specifically, when the transmission rate requirement is relatively small or high, the upper bound of transmission outage probability increases more slowly with the transmission rate requirement, compared to the case when the transmission rate requirement is median. 
By contrast, for in-band transmission, the transmission outage probability is close to a linear function of rate requirement. In Fig.~\ref{fig:trans_outage_alpha}, for both scenarios of out-of-band transmission and in-band transmission, we can observe that the optimal $\tau$ is dependent on the rate requirement. Generally, the larger the rate requirement, the smaller the value of optimal $\tau$. This motivates us to study the optimal value of $\tau$ in the next subsection.

Based on {\bf Theorem 3}, we can also derive the lower bound of achievable transmission rate of the sensor. For brevity, the characterization of the lower bound of transmission rate is provided in {\bf Appendix II}.  

\subsection{Optimal Choice of the Parameter in the Time-Switching Architecture}
\label{subsec:optimaltau}
 
In this subsection, we consider the choice of the parameter $\tau$ in the time-switching architecture. Let us recall that the time-switching architecture corresponds to the choice of $\varrho=\tau$ and $\eta=1-\tau$. First note that the bound of Theorem~\ref{thm:estimationphi} is a decreasing function of $\tau$. In other words, in the worst-case scenario,  the power outage probability is lowest when $\tau$ is largest.

Turning our attention to the result of Theorem~\ref{thm:transmissionoutage}, we note that the bound obtained therein is equal to $1$ for $\tau=0$, and also equal to $1$ for $\tau=1$. It follows that the upper bound to the transmission outage probability is minimized for a certain $\tau$ in $[0,1]$. Therefore, the value $\tau^*$ which minimizes the transmission outage probability is given by
\begin{equation}
\label{eq:optimaltau1}
\tau^*=\mathrm{argmin}_{\tau\in[0,1]}\left\{\left(\prod_{n\ge 0} \left(1+\alpha  \frac{\Gamma(n+1,  \pi\rho\inf(R,\gamma_m)^2)}{n!}\right)\right)^{-1/\alpha}\right\},
\end{equation}
where we emphasize that in \eqref{eq:optimaltau1}, $\gamma_m$ depends on $\tau$. Noticing that the function
\begin{equation*}
\tau\longmapsto\left(\prod_{n\ge 0} \left(1+\alpha  \frac{\Gamma(n+1,  \pi\rho\inf(R,\gamma_m)^2)}{n!}\right)\right)^{-1/\alpha}
\end{equation*}
is decreasing, it follows that $\tau^*$ can be obtained from the simpler expression
\begin{align}
\label{eq:optimaltau2}
\tau^*
& = \mathrm{argmax}_{\tau\in[0,1]}\left\{\sqrt{ \frac{\tau\left(h_0+(1-\tau)\xi\left(1-2^{m/((1-\tau)W)}\right)\right)}{P_{\mathrm C}h_0+(1-\tau)\sigma^2\left(2^{m/((1-\tau)W)}-1\right)}}\right\},
\end{align}
where in the last expression it becomes clear that $\tau^*$ depends only on $h_0,\xi,m,W,P_{\mathrm C}$ and $\sigma$. Note that, for a certain scenario when $h_0$, $\xi$, $W$, $P_{\mathrm C}$ and $\sigma$ are fixed, the optimal $\tau$ is only dependent on $m$. This analysis validates the observation in Fig. \ref{fig:Transoutage_TS_out} that optimal $\tau$ is not affected by density $\rho$. In addition, the analysis agrees with Fig. \ref{fig:trans_outage_alpha} that the rate requirement influences the optimal value of $\tau$.  



In Fig. \ref{fig:Transoutage_tau}, we present the numerical results for the upper bound of transmission outage probability versus $\tau$, under various rate requirement $m$ for both scenarios of out-of-band ($\xi=0$) and in-band transmission ($\xi=1$). The results are averaged over $10^6$ runs of simulation. It is obvious that the optimal $\tau$ that minimizes the transmission outage probability differs based on $m$ and $\xi$. This agrees with the above analysis. Then, we compare the optimal values of $\tau$ observed from Fig. \ref{fig:Transoutage_tau} with the corresponding analytical results obtained from (\ref{eq:optimaltau2}) in Table \ref{Optimal_tau}. The analytical results can provide very accurate guideline for the setting of optimal $\tau$ to perform best in a worst-case scenario. For discrete time-switching \cite{SDurrani2013}, the analytical results can be approximated to the closest practical value of $\tau$.

\begin{figure}
\centering
\includegraphics[width=0.45\textwidth]{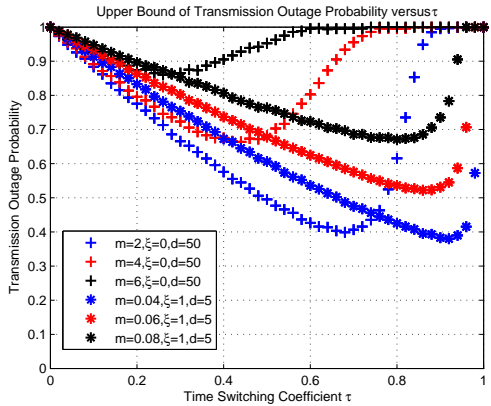}
\caption{Upper Bound of Transmission Outage Probability versus Time-Switching Coefficient $\tau$ ($\rho=0.05$, $\alpha=-1$).} \label{fig:Transoutage_tau}
\end{figure}

\begin{table}
\centering 
\caption{\footnotesize Optimal Values of $\tau$.} \label{Optimal_tau}
\begin{tabular}{cc|c|c|}
\cline{3-4}
& & Analytical result & Simulation result  \\ 
\cline{1-4}
\multicolumn{1}{ |c| }{\multirow{3}{*}{$\xi=0$} } & $ m=2 $ & 0.6828  &  0.68   \\ 
\cline{2-4}
\multicolumn{1}{ |c|  }{}    & $m=4$ &   0.4364  &   0.44 
  \\ 
\cline{2-4}
\multicolumn{1}{ |c|  }{}    &  $m=6$  &   0.2690  &   0.26     \\ 
\cline{1-4}
\multicolumn{1}{ |c|  }{\multirow{3}{*}{$\xi=1$}}   &  $m=0.02$  &   0.9185  & 0.92  \\ 
\cline{2-4}
\multicolumn{1}{ |c|  }{}   &  $m=0.06$   &    
0.8658  &  0.86        \\ 
\cline{2-4}
\multicolumn{1}{ |c|  }{}   &  $m=0.08$  &  0.7980   &     0.80   \\ 
\cline{1-4}
\end{tabular}  
\centering
\end{table}

\section{Conclusion}
\label{sec:conclusion}
This paper has presented the performance analysis of a wireless sensor powered by ambient RF energy through a stochastic geometry approach. We have analyzed the general cases when the ambient RF sources are distributed as a Ginibre $\alpha$-determinantal point process (DPP), which covers the Poisson point process case when $\alpha$ approaches zero. We  have derived the expression of the expectation and variance of the RF energy harvesting rate. We have further characterized the worst-case performance of the sensor node using the upper bound of power outage probability and transmission outage probability. Additionally, we have studied the optimal value of the time-switching coefficient to minimize the transmission outage probability. Numerical results validate all the analytical expressions, which leads us to believe that these analytical expressions are usable in practice. We have found that given a certain network density, the sensor achieves better performance when the distribution of ambient RF sources shows stronger repulsion and less attraction. Our system model can be extended by considering multiple-input and multiple-out communication channel between the sensor and data sink. Another direction of our future work is to extend the performance analysis from an individual node level to a network of nodes. 

\section*{{\bf Appendix I}}\label{app1}

\begin{proof}
Recall the following fundamental formula of point process theory found, e.g., in \cite{DaleyVereJones},
\begin{equation*}
\label{energy_variance}
V_{P_H}=\int_O\left(P^{k}_{\mathrm{H}}\right)^2\rho^{(1)}(k)\,\mathrm dk+\int_{O\times O}P^{k}_{\mathrm{H}}P^{l}_{\mathrm{H}}\,\rho^{(2)}(k,l)\,\mathrm dk\mathrm dl-\left(\int_OP^{k}_{\mathrm{H}}\,\rho^{(1)}(k)\,\mathrm dk\right)^2,
\end{equation*}
where $\rho^{(1)}$ and $\rho^{(2)}$ are the first and second correlation functions defined in \eqref{eq:correlationfunctions}. Noting that $\rho^{(2)}(k,l)=K(k,k)K(l,l)+\alpha|K(k,l)|^2$ and recombining the terms, one finds
\begin{equation*}
\mathrm{Var}\,\left(P_{\mathrm H}\right)=\int_O\left(P^{k}_{\mathrm{H}}\right)^2K(k,k)\,\mathrm dk+\alpha\int_{O\times O}P^{k}_{\mathrm{H}}P^{l}_{\mathrm{H}}\,|K(k,l)|^2\,\mathrm dk\mathrm dl.
\end{equation*}
It remains to substitute the expressions of $P^{k}_{\mathrm{H}}$ and $K(k,l)$ to obtain
\begin{multline*}
V_{P_H}=\left(\varrho \beta P_{\mathrm{S}} \frac{G_{\mathrm{S}} G_{\mathrm{H}} \lambda^{2}}{(4\pi )^{2}}\right)^2\\
\left(\int_O\left(\frac{1}{\left(\epsilon+\lVert \mathbf x_k\lVert\right)^2}\right)^2\rho\,\mathrm dk+\alpha\int_{O\times O}\frac{1}{\left(\epsilon+\lVert \mathbf x_k\lVert\right)^2}\frac{1}{\left(\epsilon+\lVert \mathbf x_l\lVert\right)^2}\,\left|\rho e^{\pi\rho k\bar{l}} e^{-\frac{\pi\rho}{2}( |k|^2 + |l|^2)}\right|^2\,\mathrm dk\mathrm dl\right).
\end{multline*}
By polar change of variable in the first term, and simply rewriting the second, we have
\begin{multline*}
 V_{P_H}=\left(\varrho \beta P_{\mathrm{S}} \frac{G_{\mathrm{S}} G_{\mathrm{H}} \lambda^{2}}{(4\pi )^{2}}\right)^2 \nonumber \\ 
\left(2\pi\rho\int_0^R\frac r{\left(\epsilon+r\right)^4}\,\mathrm dr+\alpha\rho^2\int_{O\times O}\frac{1}{\left(\epsilon+\lVert \mathbf x\lVert\right)^2}\frac{1}{\left(\epsilon+\lVert \mathbf y\lVert\right)^2}\,e^{-\pi\rho\lVert \mathbf x-\mathbf y\lVert^2}\,\mathrm d\mathbf x\mathrm d\mathbf y\right)\!.
\end{multline*}
We conclude by simply evaluating the first integral analytically and obtain \eqref{eq:variance_harvested}.
\end{proof}

\section*{{\bf Appendix II}} \label{app2}
This appendix derives the worst-case expectation of the transmission rate of the sensor, based on Theorem~\ref{thm:transmissionoutage}.

\begin{theorem}
\label{thm:evRi}
The expected transmission rate of the sensor node is lower-bounded as follows:
\begin{eqnarray}
\label{eq:evRi2}
 \mathbb{E}[C] \ge \sup_{M\in(0,+\infty)}M\left(1-\left(\prod_{n\ge 0} \left(1+\alpha  \frac{\Gamma(n+1, \pi\rho\inf(R,{\gamma}_M)^2)}{n!}\right)\right)^{-1/\alpha}\right),
\end{eqnarray}
where $\gamma_M$ was defined in \eqref{eq:gammam} which we recall here:
\begin{eqnarray}
	\gamma_M=\frac\lambda{4\pi}\sqrt{ \frac{P_{\mathrm S}\varrho\beta G_{\mathrm{S}} G_{\mathrm{H}}\left(h_0+\eta\xi\left(1-2^{M/(\eta W)}\right)\right)}{P_{\mathrm C}h_0+\eta\sigma^2\left(2^{M/(\eta W)}-1\right)}}.
\end{eqnarray}
\end{theorem}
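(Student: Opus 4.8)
The plan is to use the elementary tail-integral formula for the expectation of a nonnegative random variable, combined with the outage bound of Theorem~\ref{thm:transmissionoutage}. Since $C\ge 0$ almost surely, we have
\begin{equation*}
\mathbb{E}[C]=\int_0^{+\infty}\mathbb{P}(C>M)\,\mathrm{d}M\ge\int_0^{+\infty}\mathbb{P}(C\ge M)\,\mathrm{d}M.
\end{equation*}
The first step is to bound this integral from below by retaining only a single value of $M$: for any fixed $M\in(0,+\infty)$,
\begin{equation*}
\mathbb{E}[C]\ge\int_0^{M}\mathbb{P}(C\ge M')\,\mathrm{d}M'\ge M\,\mathbb{P}(C\ge M),
\end{equation*}
where the last inequality uses that $M'\mapsto\mathbb{P}(C\ge M')$ is nonincreasing, so it is at least $\mathbb{P}(C\ge M)$ on the whole interval $[0,M]$. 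Alternatively, and perhaps more cleanly, one observes directly that $\mathbb{E}[C]\ge M\,\mathbb{P}(C\ge M)$ is just Markov's inequality applied to the nonnegative variable $C$.

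The second step is to convert the lower bound $\mathbb{P}(C\ge M)$ into something explicit. Writing $\mathbb{P}(C\ge M)=1-\mathbb{P}(C<M)=1-P_{to}$ with the rate requirement set to $m=M$, Theorem~\ref{thm:transmissionoutage} gives
\begin{equation*}
P_{to}\le\left(\prod_{n\ge 0}\left(1+\alpha\frac{\Gamma(n+1,\pi\rho\inf(R,\gamma_M)^2)}{n!}\right)\right)^{-1/\alpha},
\end{equation*}
hence
\begin{equation*}
\mathbb{P}(C\ge M)\ge 1-\left(\prod_{n\ge 0}\left(1+\alpha\frac{\Gamma(n+1,\pi\rho\inf(R,\gamma_M)^2)}{n!}\right)\right)^{-1/\alpha}.
\end{equation*}
Combining with the first step yields, for every $M\in(0,+\infty)$, the bound $\mathbb{E}[C]\ge M\bigl(1-(\prod_{n\ge0}(1+\alpha\Gamma(n+1,\pi\rho\inf(R,\gamma_M)^2)/n!))^{-1/\alpha}\bigr)$. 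Since the left-hand side does not depend on $M$, we may take the supremum over $M\in(0,+\infty)$ on the right-hand side, which is exactly \eqref{eq:evRi2}.

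There is no real obstacle here; the argument is a two-line application of Markov's inequality together with the already-established Theorem~\ref{thm:transmissionoutage}. The only minor point worth checking is that $\gamma_M$ as defined in \eqref{eq:gammam} remains well-defined (real and nonnegative) for the range of $M$ being optimized over: when $h_0+\eta\xi(1-2^{M/(\eta W)})\le 0$ the radicand is nonpositive, $\gamma_M$ should be read as $0$, and the corresponding term in the supremum contributes $0$ (since $\inf(R,0)=0$ makes the product equal to $1$), so such $M$ are harmless and the supremum is effectively taken over those $M$ for which the radicand is positive. I would also remark that, in practice, the supremum can be evaluated numerically by a one-dimensional search over $M$, mirroring the treatment of $\tau^*$ in Section~\ref{subsec:optimaltau}.
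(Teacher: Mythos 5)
Your proof is correct and follows essentially the same route as the paper: Markov's inequality applied to the nonnegative rate $C$ gives $\mathbb{E}[C]\ge M\,\mathbb{P}(C\ge M)$, Theorem~\ref{thm:transmissionoutage} with $m=M$ bounds $\mathbb{P}(C<M)$, and the supremum over $M$ is then taken. Your additional remark on the well-definedness of $\gamma_M$ for large $M$ is a sensible clarification the paper omits, but it does not change the argument.
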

\begin{proof}
Since $C\ge 0$, we apply Markov's inequality for any $M>0$,
\begin{equation*}
\mathbb{E}[C] \ge M\mathbb P\left(C\ge M\right),
\end{equation*}
and it suffices to use Theorem~\ref{thm:transmissionoutage} to bound the probability appearing on the r.h.s.:
\begin{eqnarray}
\label{eq:intermeqthm4}
 \mathbb{E}[C] \ge M \left(1-\left(\prod_{n\ge 0} \left(1+\alpha  \frac{\Gamma(n+1,  \pi\rho\inf(R,{\gamma}_M)^2)}{n!}\right)\right)^{-1/\alpha}\right).
\end{eqnarray}
Lastly, it suffices to remark that the r.h.s. of \eqref{eq:intermeqthm4} tends to zero as $M$ tends to zero and as $M$ tends to infinity. Since as a function of $M$, the r.h.s. of \eqref{eq:intermeqthm4} is continuous, the supremum over all $M\in(0,+\infty)$ is therefore finite.
\end{proof}


\section*{Acknowledgement}

This work was supported in part by the Singapore MOE Tier 1  (RG33/12), Singapore MOE Tier 2 (M4020140 and MOE2014-T2-2-015 ARC 4/15), and National Research Foundation of Korea (NRF) grant funded by the Korean government (MSIP) (2014R1A5A1011478).

\end{document}